\tikzstyle{doubled}=[line width=1.5pt] 
\tikzstyle{dot}=[inner sep=0mm,minimum width=2mm,minimum height=2mm,draw,shape=circle]  
\tikzstyle{ddot}=[inner sep=0mm, doubled, minimum width=2.5mm,minimum height=2.5mm,draw,shape=circle]
\tikzstyle{pdot}=[inner sep=0mm, doubled, minimum width=2.5mm,minimum height=2.5mm,shape=circle]
\tikzstyle{phase dimensions}=[minimum size=6mm,font=\footnotesize,inner sep=0.2mm,outer sep=-2mm]
\tikzstyle{phase dot}=[pdot,phase dimensions]
\tikzstyle{wphase dot}=[dot, phase dimensions]
\tikzstyle{hadamard}=[fill=white,draw,inner sep=0.6mm,font=\footnotesize,minimum height=6mm,minimum width=8mm]
\tikzstyle{anti} = [fill=white,draw,inner sep=0.6mm,font=\footnotesize,minimum height=3mm,minimum width=3mm]
\tikzstyle{triang}=[regular polygon,regular polygon sides=3,draw,scale=0.75,inner sep=-0.75pt,minimum width=9mm,fill=white,regular polygon rotate=180]
\tikzstyle{triang_lesssep}=[regular polygon,regular polygon sides=3,draw,scale=0.75,inner sep=-4pt,minimum width=9mm,fill=white,regular polygon rotate=180, text depth=4mm]
\tikzstyle{triangdag}=[regular polygon,regular polygon sides=3,draw,scale=0.75,inner sep=-0.5pt,minimum width=9mm,fill=white]
\newcommand{\boxshape}[3]{%
\pgfdeclareshape{#1}{
\inheritsavedanchors[from=rectangle] 
\inheritanchorborder[from=rectangle]
\inheritanchor[from=rectangle]{center}
\inheritanchor[from=rectangle]{north}
\inheritanchor[from=rectangle]{south}
\inheritanchor[from=rectangle]{west}
\inheritanchor[from=rectangle]{east}
\backgroundpath{
\southwest \pgf@xa=\pgf@x \pgf@ya=\pgf@y
\northeast \pgf@xb=\pgf@x \pgf@yb=\pgf@y

\@tempdima=#2
\@tempdimb=#3

\pgfpathmoveto{\pgfpoint{\pgf@xa - 5pt + \@tempdima}{\pgf@ya}}
\pgfpathlineto{\pgfpoint{\pgf@xa - 5pt - \@tempdima}{\pgf@yb}}
\pgfpathlineto{\pgfpoint{\pgf@xb + 5pt + \@tempdimb}{\pgf@yb}}
\pgfpathlineto{\pgfpoint{\pgf@xb + 5pt - \@tempdimb}{\pgf@ya}}
\pgfpathlineto{\pgfpoint{\pgf@xa - 5pt + \@tempdima}{\pgf@ya}}
\pgfpathclose
}
}}
\tikzstyle{map}=[draw,shape=NEbox,inner sep=2pt,minimum height=6mm,fill=white]
\tikzstyle{mapdag}=[draw,shape=SEbox,inner sep=2pt,minimum height=6mm,fill=white]
\tikzstyle{maptrans}=[draw,shape=SWbox,inner sep=2pt,minimum height=6mm,fill=white]
\tikzstyle{mapconj}=[draw,shape=NWbox,inner sep=2pt,minimum height=6mm,fill=white]
\tikzstyle{dmap}=[draw,doubled,shape=NEbox,inner sep=2pt,minimum height=6mm,fill=white]
\tikzstyle{dmapdag}=[draw,doubled,shape=SEbox,inner sep=2pt,minimum height=6mm,fill=white]
\tikzstyle{dmaptrans}=[draw,doubled,shape=SWbox,inner sep=2pt,minimum height=6mm,fill=white]
\tikzstyle{dmapconj}=[draw,doubled,shape=NWbox,inner sep=2pt,minimum height=6mm,fill=white]
\pgfmathsetmacro{\pgf@shorten@left}{\pgfkeysvalueof{/tikz/shorten left}}
\pgfmathsetmacro{\pgf@shorten@right}{\pgfkeysvalueof{/tikz/shorten right}}
\pgfmathsetmacro{\pgf@shorten@left}{\pgfkeysvalueof{/tikz/shorten left}}
\pgfmathsetmacro{\pgf@shorten@right}{\pgfkeysvalueof{/tikz/shorten right}}
\tikzstyle{kpoint common}=[draw,fill=white,inner sep=1pt,minimum height=4mm]
\tikzstyle{kpoint}=[shape=cornerpoint,shorten left=5pt,kpoint common]
\tikzstyle{kpoint adjoint}=[shape=cornercopoint,shorten left=5pt,kpoint common]
\tikzstyle{kpoint conjugate}=[shape=cornerpoint,shorten right=5pt,kpoint common]
\tikzstyle{kpoint transpose}=[shape=cornercopoint,shorten right=5pt,kpoint common]
\tikzstyle{kpointdag}=[kpoint adjoint]
\tikzstyle{kpointadj}=[kpoint adjoint]
\tikzstyle{kpointconj}=[kpoint conjugate]
\tikzstyle{kpointtrans}=[kpoint transpose]
\tikzstyle{big kpoint}=[kpoint, minimum width=1.0 cm, minimum height=2mm, inner sep=4pt, text depth=1.5mm]
 \tikzstyle{upground}=[circuit ee IEC,thick,ground,rotate=90,scale=1.5]
 \tikzstyle{downground}=[circuit ee IEC,thick,ground,rotate=-90,scale=1.5]
\tikzstyle{smallcirc}=[circle,fill=white,draw=black]
\tikzstyle{plain}=[-,draw=black,line width=2.000]
\tikzstyle{process}=[rectangle,fill=white,draw=black]
\tikzstyle{none}=[inner sep=0pt]
\tikzstyle{discarding}=[fill=white, draw=black, shape=circle, style=upground]
\tikzstyle{smalldiscarding}=[fill=white, draw=black, style=upground, scale=0.5]
\tikzstyle{backdiscard}=[fill=white, draw=black, shape=circle, style=downground, scale=0.5]
\tikzstyle{smallbackdiscard}=[fill=white, draw=black, shape=circle, style=downground, scale=0.5]
\tikzstyle{state}=[fill=white, draw=black, style=triang, tikzit shape=rectangle]
\tikzstyle{kstate}=[fill=white, draw=black, style=kpoint, tikzit shape=rectangle]
\tikzstyle{kstateconj}=[fill=white, draw=black, style=kpoint conjugate, tikzit shape=rectangle]
\tikzstyle{kstateBIG}=[fill=white, draw=black, style=big kpoint, tikzit shape=rectangle]
\tikzstyle{effect}=[fill=white, draw=black, style=triangdag]
\tikzstyle{keffect}=[fill=white, draw=black, style=kpoint adjoint]
\tikzstyle{keffectconj}=[fill=white, draw=black, style=kpoint transpose]
\tikzstyle{morphdag}=[style=mapdag]
\tikzstyle{morph}=[style=hadamard]
\tikzstyle{WIDEmorph}=[style=hadamard, minimum width=14mm]
\tikzstyle{morphtrans}=[style=maptrans]
\tikzstyle{morphconj}=[style=mapconj]
\tikzstyle{CPMmorph}=[style=dmap]
\tikzstyle{CPMmorphconj}=[style=dmapconj]
\tikzstyle{CPMmorphdag}=[style=dmapdag]
\tikzstyle{CPMmorphtrans}=[style=dmaptrans]
\tikzstyle{CPMstate}=[fill=white, draw=black, style=triang, doubled]
\tikzstyle{CPMstateBIG}=[fill=white, draw=black, style={triang_lesssep}, doubled]
\tikzstyle{CPMkstate}=[fill=white, draw=black, style=kpoint, tikzit shape=rectangle, doubled]
\tikzstyle{CPMkstateconj}=[fill=white, draw=black, style=kpoint conjugate, tikzit shape=rectangle, doubled]
\tikzstyle{CPMkstateBIG}=[fill=white, draw=black, style=big kpoint, tikzit shape=rectangle, doubled]
\tikzstyle{CPMkeffect}=[fill=white, draw=black, style=kpoint adjoint, doubled]
\tikzstyle{CPMkeffectconj}=[fill=white, draw=black, style=kpoint transpose, doubled]
\tikzstyle{UHfB}=[fill=white, draw=black, style=triangdag, doubled, inner sep=-2pt]
\tikzstyle{leak}=[style=tinypoint, regular polygon rotate=-90]
\tikzstyle{leakfill}=[style=tinypoint, regular polygon rotate=-90, fill=black]
\tikzstyle{Z}=[style=dot, fill=green]
\tikzstyle{X}=[style=dot, fill=red]
\tikzstyle{black_dot}=[style=dot, fill=black]
\tikzstyle{white_dot}=[style=dot, fill=white]
\tikzstyle{qblack_dot}=[style=ddot, fill=black]
\tikzstyle{qwhite_dot}=[style=ddot, fill=white]
\tikzstyle{whitephase}=[style=wphase dot, fill=white]
\tikzstyle{qredphase}=[style=phase dot, fill=red]
\tikzstyle{qgreenphase}=[style=phase dot, fill=green]
\tikzstyle{had}=[style=hadamard, doubled]
\tikzstyle{box}=[style=hadamard]
\tikzstyle{classhad}=[style=hadamard]
\tikzstyle{antipode}=[style=anti]
\tikzstyle{greydot}=[fill={rgb,255: red,128; green,128; blue,128}, draw=black, shape=circle]
\tikzstyle{dottededge}=[-, dotted]
\tikzstyle{double edge}=[-, style=doubled, draw=black, tikzit draw={rgb,255: red,191; green,0; blue,64}]
\tikzstyle{new edge style 0}=[<-]
\newcommand{\tikzfigscale}[2]{\scalebox{#1}{\tikzfig{#2}}}
\newcommand{\cat}{\mathbf}
\newcommand{\morph}[1]{\xrightarrow{#1}}
\newcommand{\id}[1]{\textrm{id}_{#1}}
\newcommand{\ob}[1]{\textrm{Ob}(#1)}
\newcommand{\fhilb}{\textbf{FHilb}}
\newcommand{\frel}{\textbf{FRel}}
\newcommand{\fset}{\textbf{FSet}}
\newcommand{\cpm}[1]{\mathbf{CPM}(#1)}
\newcommand{\spl}[1]{\mathbf{Split}(#1)}
\newcommand{\set}{\textbf{Set}}
\newcommand{\putt}[1]{\mathbin{\uparrow_{#1}}}
\newcommand{\get}{\raisebox{-0.01cm}{\includegraphics[scale=0.02]{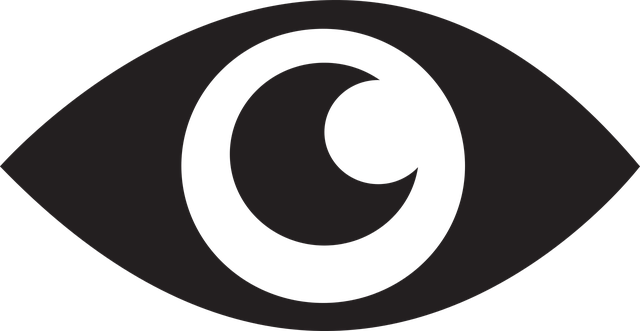}}}
\newcommand{\mix}[1]{\sim_{#1}}
\newcommand{\copyy}[1]{
\mathbin{\begin{tikzpicture}
		\node [style={black_dot}, scale=0.6] (0) at (0, 0) {};
		\node [style=none] (1) at (0, -0.2) {};
		\node [style=none] (2) at (-0.2, 0.2) {};
		\node [style=none] (3) at (0.2, 0.2) {};
		\draw (1.center) to (0);
		\draw [in=-90, out=150] (0) to (2.center);
		\draw [in=270, out=30] (0) to (3.center);
\end{tikzpicture}}_{\hspace{-0.05cm}#1}}
\newcommand{\wcopyy}{
\mathbin{\begin{tikzpicture}
		\node [style={white_dot}, scale=0.6] (0) at (0, 0) {};
		\node [style=none] (1) at (0, -0.2) {};
		\node [style=none] (2) at (-0.2, 0.2) {};
		\node [style=none] (3) at (0.2, 0.2) {};
		\draw (1.center) to (0);
		\draw [in=-90, out=150] (0) to (2.center);
		\draw [in=270, out=30] (0) to (3.center);
\end{tikzpicture}}}
\newcommand{\unit}{
\mathbin{\begin{tikzpicture}
	\begin{pgfonlayer}{nodelayer}
		\node [style={black_dot}, scale=0.6] (0) at (0, -0.1) {};
		\node [style=none] (1) at (0, 0.2) {};
	\end{pgfonlayer}
	\begin{pgfonlayer}{edgelayer}
		\draw (1.center) to (0);
	\end{pgfonlayer}
\end{tikzpicture}}}
\newcommand{\counit}[1]{
\mathbin{
\begin{tikzpicture}
	\begin{pgfonlayer}{nodelayer}
		\node [style={black_dot}, scale=0.6] (0) at (0, 0.1) {};
		\node [style=none] (1) at (0, -0.2) {};
	\end{pgfonlayer}
	\begin{pgfonlayer}{edgelayer}
		\draw (1.center) to (0);
	\end{pgfonlayer}
\end{tikzpicture}}_{\hspace{-0.05cm}#1}}
\newcommand{\effect}[1]{
\mathbin{\begin{tikzpicture}
	\begin{pgfonlayer}{nodelayer}
		\node [style=effect, scale=0.6] (0) at (0, 0.125) {};
		\node [style=none] (1) at (0, -0.15) {};
	\end{pgfonlayer}
	\begin{pgfonlayer}{edgelayer}
		\draw (1.center) to (0);
	\end{pgfonlayer}
\end{tikzpicture}}_{\hspace{-0.05cm}#1}}
\newcommand{\deco}{
\mathbin{\begin{tikzpicture}
	\begin{pgfonlayer}{nodelayer}
		\node [style=none] (1) at (-0.3, 0.2) {};
		\node [style=none] (2) at (-0.3, -0.2) {};
		\node [style=none] (3) at (0.2, -0.2) {};
		\node [style={black_dot}, scale=0.6] (4) at (-0.3, 0) {};
		\node [style=none] (5) at (0.2, 0.2) {};
		\node [style={black_dot}, scale=0.6] (6) at (0.2, 0) {};
	\end{pgfonlayer}
	\begin{pgfonlayer}{edgelayer}
		\draw (2.center) to (1.center);
		\draw (5.center) to (3.center);
		\draw (4) to (6);
	\end{pgfonlayer}
\end{tikzpicture}}}
\newcommand{\frob}{
\mathbin{\begin{tikzpicture}
	\begin{pgfonlayer}{nodelayer}
		\node [style={black_dot}, scale=0.6] (0) at (0, 0) {};
		\node [style=none] (1) at (0, 0.2) {};
		\node [style=none] (2) at (-0.2, -0.2) {};
		\node [style=none] (3) at (0.2, -0.2) {};
	\end{pgfonlayer}
	\begin{pgfonlayer}{edgelayer}
		\draw (1.center) to (0);
		\draw [in=90, out=-150] (0) to (2.center);
		\draw [in=-270, out=-30] (0) to (3.center);
	\end{pgfonlayer}
\end{tikzpicture}}}
\newcommand{\wfrob}{
\mathbin{\begin{tikzpicture}
	\begin{pgfonlayer}{nodelayer}
		\node [style={white_dot}, scale=0.6] (0) at (0, 0) {};
		\node [style=none] (1) at (0, 0.2) {};
		\node [style=none] (2) at (-0.2, -0.2) {};
		\node [style=none] (3) at (0.2, -0.2) {};
	\end{pgfonlayer}
	\begin{pgfonlayer}{edgelayer}
		\draw (1.center) to (0);
		\draw [in=90, out=-150] (0) to (2.center);
		\draw [in=-270, out=-30] (0) to (3.center);
	\end{pgfonlayer}
\end{tikzpicture}}}
\theoremstyle{definition}
\newtheorem{defn}{Definition}
\theoremstyle{plain}
\newtheorem{prop}{Proposition}
\theoremstyle{plain}
\newtheorem{corr}{Corollary}
\begin{document}
\title{The Safari of Update Structures: Visiting the Lens and Quantum Enclosures}

\author{Matthew Wilson \institute{University of Oxford} \institute{HKU-Oxford Joint Laboratory for \\ Quantum Information and Computation \email{matthew.wilson@cs.ox.ac.uk}} \and
James Hefford \institute{University of Oxford} \email{james.hefford@cs.ox.ac.uk} \and Guillaume Boisseau \institute{University of Oxford} \email{guillaume.boisseau@cs.ox.ac.uk} \and 
Vincent Wang  \institute{University of Oxford} \email{vincent.wang@cs.ox.ac.uk}}

\def\titlerunning{The Safari of Update Structures}
\def\authorrunning{M. Wilson, J. Hefford, G. Boisseau and V. Wang}

\maketitle

\begin{abstract}
    We build upon our recently introduced concept of an update structure to show that it is a generalisation of very-well-behaved lenses, that is, there is a bijection between a strict subset of update structures and vwb lenses in cartesian categories. We show that update structures are also sufficiently general to capture quantum observables, pinpointing the additional assumptions required to make the two coincide. In doing so, we shift the focus from special commutative $\dag$-Frobenius algebras to interacting (co)magma (co)module pairs, showing that the algebraic properties of the (co)multiplication arise from the module-comodule interaction, rather than direct assumptions about the magma-comagma pair. We then begin to investigate the zoo of possible update structures, introducing the notions of classical security-flagged databases, and databases of quantum systems. This work is of foundational interest as update structures place previously distinct areas of research in a general class of operationally motivated structures, we expect the taming of this class to illuminate novel relationships between separately studied topics in computer science, physics and mathematics.
\end{abstract}

\section{Introduction}
Modelling meaning updating within natural language processing is an ongoing foundational problem. Inspired by an operational interpretation, we recently introduced the concept of an \textit{update structure} \cite{hefford2020categories} as a candidate for modelling meaning updating in monoidal categories, with the intention of possible applications within DisCoCat \cite{coecke_mathematics_2019, coecke_mathematical_meaning,coecke2020meaning}. At its core, an update structure is a magma-module and comagma-comodule, with axioms adopted to govern their interactions in accordance with some desired operational intuition.\\

Particular cases of module-comodule pairs have been considered elsewhere; they make an appearance in both categorical quantum mechanics where they have been shown to capture quantum observables and measurements \cite{coecke_measurements, heunen2019categories} and in work on lenses \cite{bancilhon_update_1981,foster_combinators_nodate} in relation to the view-update problem. Additionally, the connections between monads and modules (algebras over a monad) are well-known with studies into monads in dagger categories \cite{heunen_monads} allowing for applications in the monadic dynamics framework \cite{gogioso_monadic} where modules can describe the evolution of quantum systems under the Schr\"odinger equation \cite{gogioso_schrodinger,gogioso_dynamics}.\\

In this paper we offer a unifying perspective of these fields from the vantage point of update structures. To this end, the purpose of the remainder of this article is twofold: to pinpoint precisely when lenses and update structures coincide, and to demonstrate that quantum measurements can be used to define update structures. For the former, we demonstrate that update structures are more general than vwb lenses, capturing both demolition and non-demolition viewing (Get) processes. For the latter, we demonstrate both that the application of decoherence to any update structure will produce a new update structure on classical objects, and that to characterise quantum observables it is sufficient to look at how the weak module-comodule interact without underlying assumptions on the magma-comagma pair\footnote{This is contrary to the usual approach where one typically begins with a module-comodule over a $\dag$-Frobenius structure as a starting assumption.}. Thus the focus is shifted from demands on both how the module and comodule interact \textit{and} the algebraic structure of the (co)magma, to demands placed purely on the interaction between module and comodule. This approach places quantum observables and vwb lenses within a zoo of update structures with many other enclosures left to explore, examples of which we give in the last section.

\subsection{Introduction to Update Structures}

We suppose that in a monoidal category, we may encode `states' of objects via morphisms into that object from the tensor unit. Given a pair of objects designated `system' and `property (of the system)', an update structure is a tuple of morphisms $(\putt{},\get{},\mix{},\copyy{})$ which come with four axioms chosen to force the morphisms to behave as if they are update, read-out, pre-processing, and copying procedures respectively on the system and property. We borrow the language of categorical lenses throughout, as it reflects well on the modelling intentions of these operations. We recall some definitions from \cite{hefford2020categories}.

\begin{defn}[(Strong) Update Structure]
An update structure $(\putt{},\get{},\mix{},\copyy{})$ in a monoidal category $\cat{C}$ consists of:
\begin{itemize}
    \item An object $S$, which we refer to as a \textbf{system}
    \item An object $p$, which we refer to as a \textbf{property}, which has:
    \begin{itemize}
    \item A magma structure $\mix{}: p \otimes p \rightarrow p$
    \item A comagma structure $\copyy{}: p \rightarrow p \otimes p$
    \end{itemize}
    \item A \textbf{Put} operation $\putt{}: S \otimes p \rightarrow S$
    \item A \textbf{Get} operation $\get{}: S \rightarrow S \otimes p$
\end{itemize}
Which satisfy the following equations:

\begin{equation}\label{putputgetget}
\arraycolsep=1.5cm
\begin{array}{cc}
    \textrm{PutPut} & \textrm{GetGet} \\
    \tikzfigscale{1}{figs/putput} & \tikzfigscale{1}{figs/getget}
\end{array}
\end{equation}
making $\putt{}$ a magma module and $\get{}$ a comagma module. Additionally we require:

\begin{equation}\label{putgetgetput}
\arraycolsep=1.5cm
\begin{array}{cc}
    \textrm{PutGet} & \textrm{GetPut} \\
    \tikzfigscale{1}{figs/putget} &\tikzfigscale{1}{figs/getput}
\end{array}
\end{equation}
\end{defn}

These four equations are operationally well-chosen with the intent of capturing what it means to be an ``update''. The PutPut rule says updating twice ought to be the same as performing some operation on the properties and then updating once with the new combined property. GetGet is the converse of this; retrieving a property twice should be the same as retrieving once and then performing some operation (e.g. copying) on the retrieved data. PutGet captures the notion that updating and then retrieving ought to be the same as copying, then updating with one of the copies. Finally GetPut is the notion that looking and then putting back ought to leave the system invariant.\\

The first two equations satisfied by update structures are the requirements that $(\putt{},\mix{})$ and $(\get{},\copyy{})$ are a ``weak'' module-comodule pair, that is a pair of an action and coaction on a magma-comagma, without the stricter underlying assumption of a monoid-comonoid pair\footnote{that is, associativity and a unit, and often speciality and the Frobenius law, etc.}. The corresponding operational intuition is that the magma handles the potential modification that may occur to the property stored in the system when the system is updated in succession, and respectively the comagma handles modifications to the property from successive read-outs. For instance, in modelling a faulty memory system, successive property updates may not fully overwrite the previous property of the system, and successive read-outs may corrupt the currently stored property.\\

The remaining two axioms are operational notions which constrain how the update and read-out procedures should interact. Our initial intention was to model processes that faithfully recall stored properties without requiring the update process to behave as an overwrite. So, taking the comagma to be the copy-process on the property, the detailed prose of the PutGet axiom is: ``Updating with a property and then reading-out the system is observationally equivalent to copying the property, updating the system with one copy and observing the other." The GetPut axiom reads: ``Reading-out a system and updating the system with the reading is trivial.", or, ``Updating with the same property twice is trivial."\\

In some circumstances, GetPut is quite a strong demand. For instance, classical agents who extract and re-insert information from a quantum system will disturb the system via decoherence and the GetPut axiom will not hold. Thus we also defined a strictly weaker notion.
\begin{defn}[Weak Update Structure]
A weak update structure $(\putt{},\get{},\mix{},\copyy{})$ satisfies all the axioms of an update structure apart from GetPut which we replace with the strictly weaker \textit{repeat-update} axiom:
\begin{equation}\label{repeatupdate}
\tikzfigscale{1}{figs/repeatupdate}
\end{equation}
\end{defn}
\begin{prop}\cite{hefford2020categories}
Any strong update structure is a weak update structure.
\end{prop}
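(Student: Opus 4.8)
The plan is to notice that the only difference between a weak update structure and a (strong) update structure is that the \textrm{GetPut} axiom from \eqref{putgetgetput} is replaced by the \emph{repeat-update} axiom \eqref{repeatupdate}; the remaining three requirements --- \textrm{PutPut}, \textrm{GetGet} and \textrm{PutGet} --- are literally the same in both definitions. So it suffices to start from a structure satisfying all four axioms of \eqref{putputgetget}--\eqref{putgetgetput} and derive \eqref{repeatupdate} from them. No new data and no new (co)magma structure need be produced: the underlying object, property, magma, comagma, \putt{}\ and \get{}\ are simply reused.

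First I would unpack the diagrammatic content of the two axioms that will be used. \textrm{PutGet} says that $\get{}$ after $\putt{}$ equals ``copy the property with $\copyy{}$, feed one copy back into the system with $\putt{}$, and return the other copy on a free wire'', i.e.\ $\get{}\circ\putt{} = (\putt{}\otimes\id{p})\circ(\id{S}\otimes\copyy{})$. \textrm{GetPut} says $\putt{}\circ\get{} = \id{S}$. Now read the left-hand side of \eqref{repeatupdate}: it is $\putt{}\circ(\putt{}\otimes\id{p})\circ(\id{S}\otimes\copyy{})$, and the subdiagram $(\putt{}\otimes\id{p})\circ(\id{S}\otimes\copyy{})$ is exactly the right-hand side of \textrm{PutGet}. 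Rewriting it and then collapsing the resulting $\putt{}\circ\get{}$ via \textrm{GetPut} gives
\[
\putt{}\circ(\putt{}\otimes\id{p})\circ(\id{S}\otimes\copyy{}) \;=\; \putt{}\circ\get{}\circ\putt{} \;=\; \id{S}\circ\putt{} \;=\; \putt{},
\]
which is the right-hand side of \eqref{repeatupdate}. This establishes the proposition.

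The hard part here is essentially nil --- the derivation is two local rewrites --- so the only thing to be careful about is the wire bookkeeping when applying \textrm{PutGet} inside the larger repeat-update string diagram: one must check that the ``spare'' copy of the property emitted by $\copyy{}$ is precisely the leg that \textrm{PutGet} identifies with the property output of $\get{}$, and that no symmetry or naturality moves are secretly needed (they are not, since the rewrite happens on a contiguous sub-string). I would present the argument diagrammatically by overlaying \textrm{PutGet} and then \textrm{GetPut} on the picture \eqref{repeatupdate}. Note that the converse fails in general --- not every weak update structure is strong --- but that is a separate matter, witnessed by the quantum examples treated later, and plays no role in this direction.
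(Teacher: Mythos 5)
Your derivation is correct: rewriting the doubled Put via PutGet into $\putt{}\circ\get{}\circ\putt{}$ and then collapsing with GetPut is exactly the intended two-step argument, and the wire bookkeeping is as unproblematic as you say. The paper itself omits the proof (deferring to the cited reference), but your reconstruction matches the standard derivation of the repeat-update axiom from PutGet and GetPut.
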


For a weak update, it is not assumed that extraction and re-insertion of data leaves the system invariant. The Put is however ``static'': updating twice with copies of a property is the same as updating once with that property. This staticness is enough to ensure that disturbing a system twice via reading out and reinserting is the same as just disturbing once, formally $\putt{} \circ \get{}$ is idempotent. In \cite{hefford2020categories} the idempotence of $\putt{} \circ \get{}$ was used to demonstrate that any weak update structure in a category $\cat{C}$ can be used to construct a strong update structure in $\spl{\cat{C}}$, the Karoubi envelope of $\cat{C}$.
We give a separate operational axiom to capture the existence of an update which does nothing to the system.
\begin{defn}[Trivial Update] An update structure has a trivial update if
\begin{equation}\label{prop:discard}
    \tikzfigscale{1}{figs/discardable}
\end{equation}
It has a trivial outcome if $\get{}$ has the same property but flipped vertically.
\end{defn}
Since we introduce the weakening of strong update structures to cope with interactions which disturb or collapse systems in the way that for example quantum measurements do, it is appropriate that those weak updates which have trivial updates are strong.
\begin{prop}\label{prop:weaktostrong}
    Any weak update structure with a trivial update is a strong update structure:
\end{prop}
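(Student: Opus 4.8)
The plan is to verify just the GetPut equation: a weak update structure already satisfies PutPut, GetGet and PutGet, and these four equations are precisely what it means to be a strong update structure, so GetPut is all that is missing. Let $u \colon I \to p$ be the property state witnessing the trivial update, so that $\putt{} \circ (\id{S} \otimes u) = \id{S}$; diagrammatically, $u$ is a little triangle which, fed into the property leg of $\putt{}$, erases the box.

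The first step is to use $u$ to rewrite $\get{}$. Since $\putt{} \circ (\id{S} \otimes u) = \id{S}$ we have $\get{} = \get{} \circ \putt{} \circ (\id{S} \otimes u)$, and applying PutGet to the subterm $\get{} \circ \putt{}$ turns this into
\begin{equation*}
\get{} \;=\; (\putt{} \otimes \id{p}) \circ (\id{S} \otimes \copyy{}) \circ (\id{S} \otimes u).
\end{equation*}
In pictures: plugging the trivial-update state into the Get side of the PutGet equation replaces the $\get{}$ box by a comultiplied copy of $u$, one leg of which is put back into the system and the other of which dangles as the output wire.

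The second step is to post-compose with $\putt{}$ and recognise the result. We get $\putt{} \circ \get{} = \bigl( \putt{} \circ (\putt{} \otimes \id{p}) \circ (\id{S} \otimes \copyy{}) \bigr) \circ (\id{S} \otimes u)$, and the bracketed map $S \otimes p \to S$ is exactly the left-hand side of the repeat-update axiom, hence equals $\putt{}$. Therefore $\putt{} \circ \get{} = \putt{} \circ (\id{S} \otimes u) = \id{S}$, which is GetPut, and the weak update structure is strong.

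I do not expect a serious obstacle here: the argument has essentially one idea, namely that a trivial-update state converts a Get into a copy-and-put, after which the ``static''-ness of the Put encoded by the repeat-update axiom finishes everything off, and neither PutPut nor GetGet is used. The only point needing a moment's care is checking that the wiring of the comagma in the PutGet picture lines up with the wiring in the repeat-update picture so that the two diagrams compose as claimed; this is immediate once both axioms are drawn as string diagrams, so in practice the whole proof is a three-line diagrammatic rewrite.
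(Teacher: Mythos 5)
Your proof is correct and follows exactly the paper's own argument: insert the trivial-update state, apply PutGet to turn $\get{}\circ\putt{}$ into a copy-and-put, collapse the double Put with the repeat-update axiom, and use the trivial update once more to get $\id{S}$. Nothing further is needed.
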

\begin{proof}
By insertion of the trivial update, PutGet, the repeat-update axiom, and another use of the trivial update.
\begin{equation}\label{ignortogetput}
    \tikzfigscale{1}{figs/ignortogetput}
\end{equation}
\end{proof}
However, it is not the case that all strong update structures have trivial updates. It will be shown in the next section that whilst vwb lenses define strong update structures, any vwb lens with a trivial update has a separable Put morphism, $\putt{} = \id{} \times \counit{} : S \times p \rightarrow S$.\\

It is worth pointing out that the PutPut and GetGet laws place quite strong constraints on the \\ (co)associativity of the (co)magma.  One can deduce the following from PutPut and GetGet:
\begin{equation*}
    \tikzfigscale{1}{figs/putgetcoass}
\end{equation*}
While the definition of an update structure allows for a (co)magma which is not (co)associative, in this case the Put/Get would have to be noisy. More concretely it cannot be \textit{faithful} (see definition \ref{faithful}).

\section{Relation to Very-Well-Behaved Lenses}
In this section we relate update structures to very-well-behaved lenses. Indeed, to those familiar with lenses, our choice of language ``Get'' and ``Put'' may raise suspicions that the two are related. We now make the connection precise.

\begin{defn}[Very-Well-Behaved Lens]
In a category $\cat{C}$ with finite products, a lens \cite{foster_combinators_nodate} is a tuple $(S,V,g:S\rightarrow V,p:S \times V \rightarrow S)$, where $S$ and $V$ are objects of $\cat{C}$. A lens is further \emph{very-well-behaved} (\emph{vwb}) if it satisfies the following:
\begin{itemize}
    \item (PutPut): $p(p(s,v_1),v_2) = p(s,v_2)$
    \item (PutGet): $g(p(s,v)) = v$
    \item (GetPut): $p(s,g(s)) = s$
\end{itemize}
\end{defn}
\begin{prop}
\label{lensequiv}
vwb lenses $(S,V,g,p)$ in a category $\cat{C}$ with finite products are in bijection with update structures that have trivial outcomes, $\mix{} := \pi_2$, and $\copyy{} := \delta_V$. Under this bijection, the very-well-behavedness laws are equivalent to their update structure counterparts.
\end{prop}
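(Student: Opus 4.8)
The plan is to make the bijection explicit and then read off the equivalence of the laws axiom by axiom; throughout I write $V$ for the property object, reserving $p$ for the lens Put. In the forward direction a vwb lens $(S,V,g,p)$ is sent to the tuple with system $S$, property $V$, magma $\mix{}:=\pi_2\colon V\times V\to V$, comagma $\copyy{}:=\delta_V\colon V\to V\times V$, $\putt{}:=p$, and $\get{}:=\langle\id{S},g\rangle\colon S\to S\times V$ (there is nothing to verify about the magma and comagma in isolation, these being bare morphisms). In the reverse direction, given an update structure on $(S,V)$ with $\mix{}=\pi_2$, $\copyy{}=\delta_V$ and a trivial outcome, observe that in a cartesian category the trivial-outcome condition says precisely $\pi_S\circ\get{}=\id{S}$, so by the universal property of the product $\get{}=\langle\id{S},\,\pi_V\circ\get{}\rangle$; send the update structure to the lens $(S,V,\pi_V\circ\get{},\putt{})$. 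These assignments are visibly mutually inverse: on the Get data, $g\mapsto\langle\id{S},g\rangle\mapsto\pi_V\circ\langle\id{S},g\rangle=g$ and $\get{}\mapsto\pi_V\circ\get{}\mapsto\langle\id{S},\pi_V\circ\get{}\rangle=\get{}$ (the last equality being the trivial-outcome normal form just derived), while the Put and the fixed data $\pi_2,\delta_V$ are carried across unchanged.

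Next I would substitute these forms into the four update-structure axioms. With $\mix{}=\pi_2$, PutPut reads $\putt{}\circ(\putt{}\times\id{V})=\putt{}\circ(\id{S}\times\pi_2)$, i.e.\ $p(p(s,v_1),v_2)=p(s,v_2)$, which is lens PutPut. With $\copyy{}=\delta_V$ and $\get{}=\langle\id{S},g\rangle$, PutGet reads $\langle\id{S},g\rangle\circ\putt{}=(\putt{}\times\id{V})\circ(\id{S}\times\delta_V)$; the $S$-components agree automatically and the $V$-component says $g(p(s,v))=v$, which is lens PutGet. With $\get{}=\langle\id{S},g\rangle$, GetPut reads $\putt{}\circ\langle\id{S},g\rangle=\id{S}$, i.e.\ $p(s,g(s))=s$, which is lens GetPut. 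Finally GetGet, $(\get{}\times\id{V})\circ\get{}=(\id{S}\times\copyy{})\circ\get{}$, holds automatically once $\get{}=\langle\id{S},g\rangle$ and $\copyy{}=\delta_V$: both sides equal $\langle\id{S},g,g\rangle$ by naturality of the diagonal. Hence the forward assignment lands among update structures with a trivial outcome, $\mix{}=\pi_2$ and $\copyy{}=\delta_V$ exactly when the lens is vwb, the reverse assignment always produces a vwb lens, and each very-well-behavedness law is carried precisely onto its update-structure counterpart.

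The only real obstacle is careful transcription: one must render the string-diagram forms of PutPut, GetGet, PutGet and GetPut faithfully into the cartesian idiom, and in particular recognise that ``trivial outcome'' is nothing but the assertion that the first leg of $\get{}$ equals $\id{S}$, which via the universal property forces $\get{}$ to be the graph $\langle\id{S},g\rangle$ of a unique $g\colon S\to V$ --- exactly the non-demolition restriction alluded to in the introduction. Once this normal form is in hand every comparison above is a one-line diagram chase, and the GetGet axiom, which has no lens analogue, comes for free.
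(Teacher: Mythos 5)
Your proposal is correct and follows essentially the same route as the paper: identify $\putt{}$ with $p$, use the trivial-outcome condition to put $\get{}$ in the normal form $\langle\id{S},g\rangle$ (which is exactly how the paper establishes that the two assignments are mutually inverse), and then transcribe each law, with PutPut and GetPut matching immediately from the substitutions $\mix{}=\pi_2$, $\copyy{}=\delta_V$ and PutGet requiring the short two-way check. The only cosmetic difference is that you work in element/pairing notation where the paper argues by string diagrams, and you make explicit the (correct) observation that GetGet holds automatically in this normal form.
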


\begin{proof}
Recalling that categories with finite products are cartesian monoidal, we rewrite the lens laws in the following suggestive graphical form.

\begin{equation*}
    \tikzfigscale{0.7}{figs/prop1/lawfullens}
\end{equation*}

The bijection identifies $p$ and $\putt{}$. $\copyy{}$ is $\delta_V$, the copy on $V$, and $\mix{}$ is $\pi_2$. $g$ and $\get{}$ follow the obvious type-matching strategy.

\begin{equation*}
    \tikzfigscale{1}{figs/prop1/struct2lensrules}
\end{equation*}
We verify one injection as follows:

\begin{equation*}
    \tikzfigscale{1}{figs/prop1/lensembed}
\end{equation*}

And the other (making use of the fact that $\get{}$ has trivial outcomes in the penultimate step):

\begin{equation*}
    \tikzfigscale{1}{figs/prop1/getinject}
\end{equation*}

From the substitutions on $\mix{}$ and $\get{}$, it immediately follows that the PutPut and GetPut laws are equivalent for both update structures and vwb lenses. Regarding PutGet, we have that PutGet for lenses implies PutGet for the corresponding update structure, and PutGet for an update structure implies PutGet for the corresponding lens:

\begin{equation*}
    \tikzfigscale{1}{figs/prop1/putgetlensequivalence}
\end{equation*}

It is worth noting that this correspondence makes use of the `left-delete' magma $\mix{} = \pi_2$ which is \emph{not} a monoid.
\end{proof}

Thus update structures generalise lenses; every vwb lens is an update structure but not the converse. Perhaps the most important distinction is that update structures allow for flexibility in the way that Get affects the system. When a lens is viewed as an update structure, its Get operation does not modify the system $S$; it duplicates it using the duplication map that comes with the finite product structure, and returns it unmodified alongside the computed $V$. 

In the more general setting of a monoidal category one is not guaranteed the additional structure of a universal copying map, for instance we can at best copy one basis of a quantum system. As such it is natural to require that the Get also returns a system of type $S$, rather than choosing to copy an entire system in a particular basis. Thus Proposition \ref{lensequiv} serves as a bridge between vwb lenses and update structures in monoidal settings: where $S$ and $V$ in the monoidal category are comonoid objects, and $\get{}$ and $\putt{}$ are comonoid homomorphisms, one obtains a close analogue of vwb lenses.

This vantage point of lenses \emph{qua} update structures appears to be an interesting new point in the space of lens-like structures: it generalizes the usual cartesian lenses without being as general as optics \cite{rileyCategoriesOptics2018}. We remark an interesting similarity with Abou-Saleh et al.'s monadic lenses \cite{abou-saleh_reflections_2016}: they also generalize lenses to a setting that is not quite cartesian monoidal, and in doing so come up with strikingly similar laws. Specifically, their MPutGet and MGetPut laws are exact analogues to our PutGet and GetPut, that would be drawn identically if string diagrams could be drawn in the monadic case. They also mention that monadic lenses where the $get$ direction is effectful require an analogue of our GetGet law. The Kleisli category they use is however not monoidal in general, thus their monadic lenses are not quite an instance of update structures.

We note that intuitively vwb lenses represent a case in which the only relevant update is the most recent one, clearly then there should be no interesting vwb lenses with trivial updates. Indeed, any Put in a vwb lens with a trivial update separates as a map which throws the new property away and leaves the system alone.  

\begin{equation*}
    \tikzfigscale{1}{figs/ignorevwb}
\end{equation*}


An additional distinguishing point between lenses and update structures appears relative to composition. Notions of lenses all have the important property of composability: one can compose lenses to build accessors for a big data type in terms of accessors for its subcomponents. In this paper we however do not investigate composition of update structures, and it appears there might be different sensible such notions\footnote{with one possibility appearing in \cite{hefford2020categories}}, requiring different additional structure on the (co)magmas and operations.

\section{Interacting Module-Comodule Pairs}
Given a (co)magma (co)module pair, there are many possible PutGet rules one could impose and in this section we will study a handful of these possibilities, elucidating the restrictions each choice places on the rest of the structure, in particular the magma-comagma pair.

\begin{equation*}
\arraycolsep=0.8cm
\begin{array}{ccc}
    A & B & C \\
    \tikzfigscale{1}{figs/putget1} & \tikzfigscale{1}{figs/putget2} &
    \tikzfigscale{1}{figs/putget3}
\end{array}
\end{equation*}

\begin{prop}\label{prop:putgetA}
    In the presence of PutPut and GetGet, PutGet A implies:
    \begin{enumerate}
    \item with the GetPut rule, $S\simeq S\otimes p$, which in many categories, for instance $\fhilb$, $\frel$ and $\fset$, means $S$ or $p$ have to in some way be trivial. In $\fhilb$ $p=I$, and in $\frel$ and $\fset$, either $p=I$, $p=\varnothing$ or $S=\varnothing$.
    \item If the (co)magma has a (co)unit then the identity separates.

    \end{enumerate}
\end{prop}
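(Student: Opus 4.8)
The plan is to use PutGet~A to upgrade the one-sided inverse supplied by GetPut into a two-sided one. GetPut says $\putt{}\circ\get{}=\id{S}$, so the endomorphism $e:=\get{}\circ\putt{}\colon S\otimes p\to S\otimes p$ is idempotent, and $(S,\putt{},\get{})$ splits it, since $\putt{}\circ\get{}=\id{S}$ and $\get{}\circ\putt{}=e$; hence $e=\id{S\otimes p}$ is equivalent to $S\simeq S\otimes p$. Part~(1) therefore reduces to showing this splitting idempotent is trivial. I would argue diagrammatically: rewrite $\get{}\circ\putt{}$ using PutGet~A, rewrite the resulting occurrences of $\get{}$ and $\putt{}$ in terms of $\copyy{}$ and $\mix{}$ (and conversely) using GetGet and PutPut --- this is exactly the regime in which the (co)associativity constraints that PutPut and GetGet force on the (co)magma, noted earlier, become available --- and finally cancel the interior $\putt{}\circ\get{}$ via GetPut, collapsing the expression to $\id{S\otimes p}$. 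Together with $\putt{}\circ\get{}=\id{S}$ this makes $\get{}$ and $\putt{}$ mutually inverse, so $S\simeq S\otimes p$.

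The statements about the concrete categories then follow by applying the evident ``size'' invariant to $S\simeq S\otimes p$. In $\fhilb$, $\dim S=(\dim S)(\dim p)$, forcing $\dim p=1$, i.e.\ $p=I$ (the case $\dim S=0$ being the degenerate exception). In $\fset$ and $\frel$ an isomorphism is carried by a bijection of underlying sets, so $\lvert S\rvert=\lvert S\rvert\cdot\lvert p\rvert$; if $S\neq\varnothing$ this forces $\lvert p\rvert=1$, i.e.\ $p=I$, unless $p=\varnothing$, in which case $S\otimes p=\varnothing$ and hence $S=\varnothing$. Collecting the cases gives exactly the trichotomy in the statement.

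For part~(2) one runs the same rewriting of $\get{}\circ\putt{}$ by PutGet~A, PutPut and GetGet, but, GetPut being unavailable, the final cancellation is replaced by an appeal to the (co)unit. If $\copyy{}$ has a counit $\varepsilon\colon p\to I$ (dually if $\mix{}$ has a unit $\eta\colon I\to p$), feeding $\varepsilon$ into PutGet~A and simplifying via the counit law $(\id{p}\otimes\varepsilon)\circ\copyy{}=\id{p}$ (dually the unit law) forces $\get{}$ to factor as $\id{S}\otimes\eta'$ through a freely inserted property, dually $\putt{}$ to factor as $\id{S}\otimes\varepsilon'$ by discarding its property input; the identity on $S\otimes p$ then separates as a tensor with a trivial $p$-component, exactly in the manner of the separable Put $\putt{}=\id{}\times\counit{}$ found for vwb lenses with a trivial update.

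I expect the collapse $e=\id{S\otimes p}$ in part~(1) to be the main obstacle: everything downstream is elementary dimension/cardinality arithmetic or a formal dualisation, whereas this step is where the precise shape of PutGet~A and the (co)associativity it forces on the (co)magma genuinely enter, and one must verify that the rewrite terminates at the bare identity rather than at some other splitting idempotent. In part~(2) the corresponding subtlety is choosing where to insert the (co)unit so that the factorisation of $\get{}$ (resp.\ $\putt{}$) is forced rather than merely available.
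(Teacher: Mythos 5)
Your overall skeleton for part (1) --- GetPut gives $\putt{}\circ\get{}=\id{S}$, so it suffices to show $\get{}\circ\putt{}=\id{S\otimes p}$ --- matches the paper, but the step you flag as ``the main obstacle'' is not something you actually prove; in the paper it is simply what PutGet~A \emph{says}. Rule A is the variant in which $\get{}\circ\putt{}$ equals the identity on $S\otimes p$ outright (variants B and C are the ones that factor it through $\copyy{}$ or $\mix{}$, which is why only those yield (co)associativity consequences). Hence the paper's part (1) is immediate: Put and Get are mutually inverse and $S\simeq S\otimes p$. Your proposed rewriting argument --- unfold via GetGet/PutPut, invoke the induced (co)associativity, cancel an interior $\putt{}\circ\get{}$ --- is left as a plan whose termination at $\id{S\otimes p}$ you explicitly decline to verify, so as written there is a genuine gap: nothing in your text establishes the two-sided inverse. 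The downstream dimension and cardinality case analysis is fine.

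For part (2) the gap is more substantive. The paper first derives an intermediate identity by evaluating $(\get{}\otimes\id{p})\circ\get{}\circ\putt{}$ in two ways --- once by applying PutGet~A to the inner $\get{}\circ\putt{}$, once by applying GetGet to the outer pair and then PutGet~A --- obtaining $\get{}\otimes\id{p}=\id{S}\otimes\copyy{}$, and dually $\get{}\circ\putt{}\circ(\putt{}\otimes\id{p})$ gives the analogous statement with $\mix{}$ via PutPut. Only after this does inserting the (co)unit into one of the free $p$ wires disconnect the identity on the $p$ wire. Your proposal of ``feeding $\varepsilon$ into PutGet~A and simplifying via the counit law'' skips this intermediate identity entirely; applied directly to $\get{}\circ\putt{}=\id{S\otimes p}$ the counit yields nothing about how $\get{}$ factors, so your claimed factorisations of $\get{}$ and $\putt{}$ are asserted rather than forced. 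You would need to identify and prove the intermediate equation (this is precisely where GetGet and PutPut enter the argument) before the (co)unit can do its work.
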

\begin{proof}
\begin{enumerate}
\item In the presence of the GetPut law we immediately have that $S\simeq S\otimes p$. In $\fhilb$ this requires $p=I$ and in $\frel$ and $\fset$ either $p=I$, $p=\varnothing$ or $S=\varnothing$. Thus the update is rendered essentially trivial.
\item By looking at $(\get{}\otimes \id{p})\circ\get{}\circ \putt{}$ and $\get{}\circ\putt{}\circ(\putt{}\otimes\id{p})$ one can show
\begin{equation*}
    \tikzfigscale{1}{figs/putgeta1}
\end{equation*}
and if the (co)magma have (co)units then (assuming $S\neq \varnothing$ in $\fset$ or $\frel$), 
\begin{equation*}
    \tikzfigscale{1}{figs/putgeta3}
\end{equation*}
\end{enumerate}
\end{proof}

Thus an update satisfying PutGet A would imply a very exotic magma-comagma: they cannot have the GetPut rule in finite dimensions and even dropping this rule, they cannot have a (co)unit.

\begin{prop}\label{prop:putgetB}
    PutGet B makes the comagma coassociative under the Put $\putt{}$:
    \begin{equation*}
    \tikzfigscale{1}{figs/coassput}
    \end{equation*}
\end{prop}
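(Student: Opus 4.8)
The plan is to evaluate the morphism $(\get{}\otimes\id{p})\circ\get{}\circ\putt{}\colon S\otimes p\to S\otimes p\otimes p$ --- first Put, then Get, then Get once more on the system wire --- in two different ways and to obtain the claimed identity by comparing the results. For the first evaluation I would rewrite the innermost $\get{}\circ\putt{}$ using PutGet B, getting $(\get{}\otimes\id{p})\circ(\putt{}\otimes\id{p})\circ(\id{S}\otimes\copyy{})$; bifunctoriality lets me slide the surviving $\get{}$ past the spectator $\id{p}$, and a second application of PutGet B then rewrites this, up to interchange, as $(\putt{}\otimes\id{p\otimes p})\circ\bigl(\id{S}\otimes((\copyy{}\otimes\id{p})\circ\copyy{})\bigr)$: the input property is comultiplied twice with one bracketing, its leftmost leaf is absorbed into the Put alongside the system, and the remaining two leaves are exported.

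For the second evaluation I would instead act on the outermost pair $(\get{}\otimes\id{p})\circ\get{}$, which the GetGet axiom turns into $(\id{S}\otimes\copyy{})\circ\get{}$; the whole morphism becomes $(\id{S}\otimes\copyy{})\circ\get{}\circ\putt{}$, and a single use of PutGet B rewrites it as $(\putt{}\otimes\id{p\otimes p})\circ\bigl(\id{S}\otimes((\id{p}\otimes\copyy{})\circ\copyy{})\bigr)$ --- the same shape as before, but now the double comultiplication carries the other bracketing. Equating the two expressions is exactly the assertion of the proposition: the two bracketings of $\copyy{}\circ\copyy{}$ agree once their first leaf has been fed into the Put, which is the diagram in \texttt{figs/coassput}.

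I do not anticipate a real obstacle: the argument uses only PutGet B (applied twice), GetGet (applied once) and the interchange law, so it is a short string-diagram manipulation; notably PutPut is not needed. The one point requiring care is the tensor-leg bookkeeping --- keeping straight which of the two wires produced by each $\copyy{}$ is consumed by $\putt{}$ and which are returned --- since a slip there would collapse the statement to a triviality. It should also be emphasised that what is obtained is coassociativity of $\copyy{}$ only ``seen through'' the Put, not coassociativity of the comagma on the nose.
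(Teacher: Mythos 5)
Your proposal is correct and is essentially the paper's own argument: the paper chains from the left-hand side to the right-hand side via PutGet B (twice), GetGet, and PutGet B in reverse, which is exactly your two-way evaluation of $(\get{}\otimes\id{p})\circ\get{}\circ\putt{}$ presented as a single rewrite sequence rather than a meet-in-the-middle. Your bookkeeping of which copy-leaf is consumed by $\putt{}$ is right, and your observation that PutPut is not needed (it is only used for the PutGet C analogue) also matches the paper.
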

\begin{proof}
Starting with the left-hand side and applying PutGet B twice followed by GetGet and PutGet B in reverse recovers the right-hand side.
\end{proof}

\begin{prop}\label{prop:putgetC}
    PutGet C makes the magma associative under the Get $\get{}$:
    \begin{equation*}
    \tikzfigscale{1}{figs/assput}
    \end{equation*}
\end{prop}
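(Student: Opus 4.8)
The plan is to mirror the proof of Proposition~\ref{prop:putgetB} under the formal duality that swaps $\putt{}$ with $\get{}$, the magma $\mix{}$ with the comagma $\copyy{}$, PutPut with GetGet, and PutGet B with the correspondingly oriented PutGet C: everything that worked ``under the Put'' for the comagma in that proof should work ``under the Get'' for the magma here.

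Concretely, I would start from the left-hand side of the displayed equation, which presents $\get{}$ (iterated to expose the requisite $S$- and $p$-wires) composed with $\mix{}$ bracketed one way, and apply PutGet C to a $\get{}$--$\mix{}$ junction. This replaces that junction by a $\putt{}$-expression. Applying PutGet C a second time on the newly exposed sub-diagram produces a composite of two consecutive Puts; PutPut then collapses these into a single Put precomposed with $\mix{}$. Finally, running PutGet C backwards re-expresses this single Put as a $\get{}$--$\mix{}$ junction with the bracketing now on the other side, which is precisely the right-hand side.

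The main obstacle I expect is bookkeeping of wires: PutGet C rewrites only one $\get{}$--$\mix{}$ interface at a time, so I must track carefully which $p$-wire each rewrite acts on, so that the configuration reached after the second application genuinely matches the left-hand side of PutPut. I also want to make sure that no (co)unit and no prior (co)associativity of $\mix{}$ or $\copyy{}$ sneaks in — the point is to \emph{derive} associativity of the magma under $\get{}$, not to assume it — and that PutGet C, read in reverse, is applicable with no side conditions, exactly as PutGet B was used in reverse in the proof of Proposition~\ref{prop:putgetB}.
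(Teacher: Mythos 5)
Your proposal matches the paper's own proof exactly: the paper also starts from the left-hand side and applies PutGet C twice, then PutPut, then PutGet C in reverse to reach the right-hand side, mirroring the proof of Proposition~\ref{prop:putgetB} under the Put/Get duality. Your attention to wire bookkeeping and to not smuggling in a (co)unit or prior associativity is appropriate and consistent with how the paper uses these rewrites.
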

\begin{proof}
Starting with the left-hand side and applying PutGet C twice followed PutPut and PutGet C in reverse recovers the right-hand side.
\end{proof}

\begin{prop}\label{prop:putgetBC}
    Demanding PutGet B and C makes the magma and comagma Frobenius under the Put (and the Get):
    \begin{equation*}
    \tikzfigscale{1}{figs/frobput}
    \end{equation*}
\end{prop}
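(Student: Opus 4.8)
The plan is to establish the two Frobenius equations — that the Put/Get-dressed ``comultiply-then-multiply'' morphism agrees with each of its two ``snake''-shaped counterparts — by rewriting each leg into another through a chain of diagram manipulations, in the same spirit as the proofs of Propositions~\ref{prop:putgetB} and~\ref{prop:putgetC}, but now feeding in both PutGet~B and PutGet~C together with the two conclusions already established. Here the relevant dressed magma is the morphism realised by two successive applications of $\putt{}$ (legitimised by PutPut), the dressed comagma the one realised by two successive applications of $\get{}$ (legitimised by GetGet), and ``$\mix{}$ and $\copyy{}$ Frobenius under the Put (and the Get)'' asserts that these satisfy the usual Frobenius identities once read off $\get{}\circ\putt{}$, as depicted in \texttt{figs/frobput}.

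For the first Frobenius equation I would take the dressed form of the comultiply-then-multiply leg and apply PutGet~B to the $\get{}\circ\putt{}$ pattern at its core; this introduces a $\copyy{}$, after which GetGet splits off a second $\get{}$ and a second application of PutGet~B pushes the two comagma legs apart. At that stage Proposition~\ref{prop:putgetB} — coassociativity of $\copyy{}$ under $\putt{}$ — is precisely what re-associates the three $p$-wires so that the diagram coincides with the dressed form of $(\mix{}\otimes\id{p})\circ(\id{p}\otimes\copyy{})$, giving one of the two snake identities.

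The second Frobenius equation is obtained by the mirror-image argument: starting again from the comultiply-then-multiply leg, apply PutGet~C at the $\get{}\circ\putt{}$ core, use PutPut to merge two Puts, apply PutGet~C a second time to spread the magma legs, and then invoke Proposition~\ref{prop:putgetC} — associativity of $\mix{}$ under $\get{}$ — to re-associate and land on the dressed form of $(\id{p}\otimes\mix{})\circ(\copyy{}\otimes\id{p})$. Composing the two halves yields the full Frobenius law; the ``and the Get'' clause then follows because all the rewrites above can be read with the roles of $\putt{}$ and $\get{}$ (and correspondingly of PutPut/GetGet and of Propositions~\ref{prop:putgetB}/\ref{prop:putgetC}) interchanged.

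I expect the bookkeeping to be the only genuine difficulty: one must track which dressed copy of $\mix{}$ or $\copyy{}$ is produced by each application of PutGet~B or~C and check that the chain terminates exactly on the intended snake diagram rather than on a variant still requiring one more (co)associativity move — it is at that final step that Propositions~\ref{prop:putgetB} and~\ref{prop:putgetC} get consumed. It is also worth confirming explicitly that neither GetPut nor any (co)unit enters the argument, so that the conclusion rests solely on PutPut, GetGet, PutGet~B and PutGet~C.
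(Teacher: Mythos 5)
There is a genuine gap: as described, neither of your two rewrite chains can terminate on the snake it targets. Each snake contains one $\mix{}$ and one $\copyy{}$ sitting next to each other on the property wires, and in this axiom system the only rule that creates or absorbs a $\mix{}$ adjacent to a Put is PutPut (trading $\putt{}\circ(\putt{}\otimes\id{p})$ for $\putt{}\circ(\id{S}\otimes\mix{})$), while the only rules that create or absorb a $\copyy{}$ are PutGet~B and GetGet. Your first chain (PutGet~B, GetGet, PutGet~B, then Proposition~\ref{prop:putgetB}) never invokes PutPut, so it can only shuffle $\copyy{}$'s: starting from the middle term it stalls (after one reverse use of PutGet~B there is a lone $\get{}$ with nothing for GetGet to act on), and starting from a doubly-Get-dressed middle it produces a triple copy $(\copyy{}\otimes\id{p})\circ\copyy{}$ and cycles back to where it began; it cannot manufacture the $\mix{}$ required by $(\mix{}\otimes\id{p})\circ(\id{p}\otimes\copyy{})$. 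Symmetrically, your second chain never invokes PutGet~B and so cannot place the $\copyy{}$ appearing in $(\id{p}\otimes\mix{})\circ(\copyy{}\otimes\id{p})$. The appeal to Propositions~\ref{prop:putgetB} and~\ref{prop:putgetC} does not repair this: (co)associativity only re-brackets three parallel legs of a single (co)magma and never exchanges a $\mix{}$ for a $\copyy{}$. There is also a typing problem in your reading of the statement: you propose to dress the magma by two Puts \emph{and} the comagma by two Gets inside one snake, but a snake has only one system wire, so both cannot be dressed independently; the intended reading is that each Frobenius composite $X : p\otimes p \to p\otimes p$ is compared after its left output leg is fed into a single $\putt{}$.

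For comparison, the paper's proof is a three-way rewrite of the single composite $\get{}\circ\putt{}\circ(\putt{}\otimes\id{p})$: applying PutGet~B then PutPut yields the first snake under the Put; applying PutPut then PutGet~B yields the middle term $\copyy{}\circ\mix{}$ under the Put; applying PutGet~C then PutGet~B yields the second snake under the Put. Every branch couples a PutGet axiom with PutPut (or with the other PutGet axiom), which is precisely what converts the double Put into the $\mix{}$ occurring in each Frobenius diagram; neither GetPut, nor any (co)unit, nor Propositions~\ref{prop:putgetB}--\ref{prop:putgetC} is needed. If you prefer a ``middle equals each snake'' presentation, run the middle branch backwards and then one of the other two branches forwards.
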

\begin{proof}
Consider $\get{}\circ \putt{} \circ (\putt{}\otimes \id{p})$ and apply the following laws:
\begin{itemize}
\item PutGet B followed by PutPut, for the left-hand side
\item PutPut followed by PutGet B, for the centre
\item PutGet C followed by PutGet B, for the right-hand side
\end{itemize}
\end{proof}

Furthermore, there are conditions the magma and comagma inherit via reference to the (co)module structure alone.

\begin{defn}[Commutative] Put (Get) is commutative if
\begin{equation}\label{commuteget}
    \tikzfigscale{1}{figs/commuteget}
\end{equation}
\end{defn}
\begin{prop}\label{prop:commute}
    Commutativity of Put (Get), makes the magma (comagma) commutative under the Put (Get):
    \begin{equation*}
    \tikzfigscale{1}{figs/propcommute}
    \end{equation*}
\end{prop}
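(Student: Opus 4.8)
The plan is to treat the two halves of the statement symmetrically, proving the Put/magma case in full and then reading off the Get/comagma case by flipping the diagram vertically. For the Put case, I would start from the commutativity axiom \eqref{commuteget}, which in its Put-flavoured form says that applying two successive updates is insensitive to the order of the two incoming properties, i.e.\ $\putt{}\circ(\putt{}\otimes\id{p}) = \putt{}\circ(\putt{}\otimes\id{p})\circ(\id{S}\otimes\sigma_{p,p})$, where $\sigma_{p,p}$ is the symmetry on $p\otimes p$. The key move is then to rewrite both sides using PutPut, which expresses that $\putt{}$ is a magma module: $\putt{}\circ(\putt{}\otimes\id{p}) = \putt{}\circ(\id{S}\otimes\mix{})$. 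Substituting on the left gives $\putt{}\circ(\id{S}\otimes\mix{})$, and substituting on the right (after absorbing the swap into the magma) gives $\putt{}\circ(\id{S}\otimes(\mix{}\circ\sigma_{p,p}))$, so the two agree — which is exactly the assertion that the magma becomes commutative once composed with the Put, i.e.\ commutative ``under the Put'' as drawn in the displayed equation of the proposition.

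For the Get/comagma half I would run the dual argument: commutativity of Get reads $(\get{}\otimes\id{p})\circ\get{} = (\id{S}\otimes\sigma_{p,p})\circ(\get{}\otimes\id{p})\circ\get{}$, and GetGet (the comagma comodule law) rewrites $(\get{}\otimes\id{p})\circ\get{}$ as $(\id{S}\otimes\copyy{})\circ\get{}$. Feeding this into both sides yields $(\id{S}\otimes\copyy{})\circ\get{} = (\id{S}\otimes(\sigma_{p,p}\circ\copyy{}))\circ\get{}$, the cocommutativity of the comagma under the Get. Graphically each half is a two-step rewrite: replace a nested Put (resp.\ Get) by a magma-then-Put (resp.\ Get-then-comagma) via PutPut (resp.\ GetGet) on each side of the commutativity equation, then observe the two results coincide.

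I do not expect a genuine obstacle; the content is essentially a bookkeeping exercise in composing the commutativity hypothesis with the (co)module laws. The only point requiring a little care is lining up the graphical axiom \eqref{commuteget} with the algebraic statement ``the order of two updates does not matter'', and in particular checking that $\sigma_{p,p}$ acts on the correct pair of property wires before PutPut is invoked (dually, after GetGet is invoked) — getting this orientation wrong would yield the opposite magma $\mix{}^{\mathrm{op}}$ in the wrong slot rather than exhibiting $\mix{}$ itself as commutative under $\putt{}$. Once the axiom is put in this form, the derivation is three lines in either direction.
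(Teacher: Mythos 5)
Your proposal is correct and matches the paper's own (one-line) proof exactly: the paper simply says ``by use of PutPut on both sides of the equation,'' which is precisely your rewrite of each side of the commutativity hypothesis via the module law, with the dual GetGet argument for the comagma half. Your additional care about where the symmetry $\sigma_{p,p}$ sits is sensible but does not change the argument.
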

\begin{proof}
By use of PutPut on both sides of the equation.
\end{proof}

\begin{prop}\label{prop:ignorable}
    A trivial update (outcome) gives the magma (comagma) a unit (counit) under the Put (Get):
    \begin{equation*}
    \tikzfigscale{1}{figs/propunit}
    \end{equation*}
\end{prop}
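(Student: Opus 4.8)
The plan is to follow the same pattern as the proofs of Propositions~\ref{prop:putgetB} and~\ref{prop:commute}: re-express the desired (co)unit law ``under the Put (Get)'' as a statement about iterated Puts (respectively Gets) by means of PutPut (respectively GetGet), and then collapse it using the trivial-update (respectively trivial-outcome) axiom. No new machinery is needed beyond the axioms of an update structure together with the trivial-update/outcome axiom \eqref{prop:discard}.

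Concretely, write $e\colon I\to p$ for the trivial update, so that \eqref{prop:discard} reads $\putt{}\circ(\id{S}\otimes e)=\id{S}$, and write $u\colon p\to I$ for the trivial outcome, so that $(\id{S}\otimes u)\circ\get{}=\id{S}$. For the magma half I would establish the two unit laws in the displayed equation, namely $\putt{}\circ(\id{S}\otimes(\mix{}\circ(e\otimes\id{p})))=\putt{}$ and $\putt{}\circ(\id{S}\otimes(\mix{}\circ(\id{p}\otimes e)))=\putt{}$. In each case, first rewrite $\putt{}\circ(\id{S}\otimes\mix{})$ as $\putt{}\circ(\putt{}\otimes\id{p})$ using PutPut, turning the composite into two successive Puts one of whose property inputs is $e$; feeding $e$ into the first Put gives $\putt{}\circ((\putt{}\circ(\id{S}\otimes e))\otimes\id{p})=\putt{}$, and feeding it into the second gives $\putt{}\circ(\putt{}\otimes e)=(\putt{}\circ(\id{S}\otimes e))\circ\putt{}=\putt{}$, both by the trivial-update axiom. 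The comagma half is the vertical mirror image: expand $(\id{S}\otimes\copyy{})\circ\get{}$ as $(\get{}\otimes\id{p})\circ\get{}$ via GetGet and cancel one occurrence of $\get{}$ against $u$, which exhibits $u$ as a counit for $\copyy{}$ under the Get on both legs.

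There is no substantive obstacle; the only care needed is bookkeeping, since PutPut treats the two inputs of $\mix{}$ asymmetrically (the previously stored property versus the freshly supplied one), so one must check that attaching $e$ to either leg yields the identity --- which it does, using $\putt{}\circ(\id{S}\otimes e)=\id{S}$ at the appropriate Put --- and dually for $u$ and $\get{}$. Assembling these two cases (and their mirror images) gives the diagrammatic identities asserted by the proposition.
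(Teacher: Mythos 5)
Your proof is correct and takes essentially the same route as the paper's: the paper likewise uses PutPut (dually GetGet) to expand the Put composed with the unit-augmented magma into two successive Puts and then collapses one of them with the trivial-update (trivial-outcome) axiom. You merely spell out the wire-bookkeeping for the two legs of the (co)magma that the paper's one-line proof leaves implicit.
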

\begin{proof}
By use of PutPut in the 2nd and 3rd expressions.
\end{proof}

\section{Quantum Measurements As Weak Update Structures}
We now show that decoherence preserves the equations of a weak update structure, and that the projector valued spectra used to define quantum measurements also define strong update structures. As a corollary, every quantum measurement can therefore be used to define an update structure. 
\subsection{Applying Decoherence to Update Structures}
That applying decoherence to the property wire produces a new update structure will be a corollary of the following general result. 
\begin{prop}[Transformation on Weak Update structures]\label{transupdate}
Given an update structure $(\putt{},\get{},\mix{},\copyy{})$
and a morphism $m:p\morph{}p$ satisfying, 
\begin{equation}
    \tikzfigscale{1}{figs/magmahom2}
\end{equation}
one can define a weak update structure by the following:
\begin{equation}
    \tikzfigscale{1}{figs/magmaupdate2}
\end{equation}
\end{prop}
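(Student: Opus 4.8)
The plan is to verify directly that the transformed tuple satisfies the four equations required of a weak update structure: PutPut, GetGet, PutGet, and the repeat-update axiom \eqref{repeatupdate}. Concretely the transformed operations insert a copy of $m$ on each property leg --- the transformed Put precomposes $\putt{}$ with $m$ on the property, the transformed Get postcomposes $\get{}$ with $m$, and the transformed (co)magma conjugates $\mix{}$ and $\copyy{}$ by $m$ --- so each verification follows the same pattern: unfold the definitions, invoke the matching axiom of the given update structure, then transport the surplus copies of $m$ back into place using the hypotheses on $m$ (idempotence and intertwining with $\mix{}$ and $\copyy{}$, i.e. the homomorphism laws in the figure), collapsing any $m\circ m$ by idempotence.

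I would begin with the transformed PutPut and GetGet, which are mirror images of one another. For PutPut: expand the transformed Put composed with itself along a shared property wire, use the interchange law to gather the inserted $m$'s onto the two input property wires, apply the original PutPut to merge the two Puts into the magma, and slide the $m$'s through the magma via the homomorphism law to recognise the transformed $\putt{}\circ(\id{S}\otimes\mix{})$. GetGet is the vertically reflected argument from the original GetGet and the comagma-homomorphism law, with one use of idempotence. Both are routine diagram chases.

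The heart of the proof is the transformed PutGet, the only axiom that genuinely couples the module and the comodule. I would start from the transformed $\get{}\circ\putt{}$, unfold, apply the original PutGet to rewrite it as ``copy one property leg, then Put'', and use the comagma-homomorphism law to slide the surrounding $m$'s across $\copyy{}$ so that one copy is absorbed into the transformed Put and another lands on the leg that the transformed Get produces; idempotence removes the remaining $m\circ m$, leaving exactly the transformed ``copy-then-Put'' expression. The repeat-update axiom then follows cheaply: substituting the transformed PutGet turns its left-hand side into the composite of the transformed Put, the transformed Get, and the transformed Put again; idempotence reduces that middle stretch to $\get{}$ carrying a single $m$; and commuting that $m$ past $\mix{}$ and $\copyy{}$ reduces the claim to the original repeat-update axiom. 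Note that only the weak axioms of the given structure are used anywhere, so it suffices for the input to be merely a weak update structure, matching the proposition's name.

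I expect the transformed PutGet to be the main obstacle: it is where the homomorphism hypotheses and idempotence both genuinely do work, and the delicate point is the bookkeeping of which property leg each inserted $m$ ends up on after passing through $\copyy{}$. It is also worth remarking why one obtains only a \emph{weak} update structure: the transformed GetPut would assert $\putt{}\circ(\id{S}\otimes m)\circ\get{}=\id{S}$, yet for a lossy $m$ such as decoherence $\putt{}\circ(\id{S}\otimes m)\circ\get{}$ is a nontrivial idempotent rather than the identity, so GetPut genuinely fails --- exactly the situation anticipated in the remarks preceding Proposition~\ref{prop:weaktostrong}.
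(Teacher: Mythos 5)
Your proposal is correct and follows essentially the same route as the paper: a direct diagram chase for each axiom, commuting the inserted copies of $m$ past $\mix{}$ and $\copyy{}$ via the homomorphism laws and collapsing $m\circ m$ by idempotence. The paper in fact only writes out the repeat-update case and declares the remaining axioms to ``follow similarly,'' so your sketch is if anything more explicit. One caveat: you take the hypothesis on $m$ to be that it is an idempotent magma--comagma homomorphism, whereas the paper states that condition only as an \emph{important special case} of the (weaker) hypothesis in its figure; since every one of your uses of these laws occurs inside a composite containing a Put or a Get, the argument should transport to the weaker ``under the Put/Get'' form, but as written you have verified the special case rather than the full statement.
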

\begin{proof}
We give the proof that the Repeat Update axiom is preserved
\begin{equation}
    \tikzfigscale{1}{figs/magmarepeatupdate2}
\end{equation}
The rest of the axioms follow similarly.
\end{proof}
As an important special case the above proposition applies to any $m:p \rightarrow p$ satisfying:
\begin{equation}
    \tikzfigscale{1}{figs/magmahom}
\end{equation}
that is, any idempotent magma co-magma homomorphism. For any $\dagger$-compact category $\cat{C}$ the functor $F:\cat{C}\morph{}\cpm{\cat{C}}$ which takes $f: A \rightarrow B$ to $f \otimes f^{*}: A \otimes A^{*} \rightarrow B \otimes B^{*}$, maps any update structure in $\cat{C}$ to an update structure in $\cpm{\cat{C}}$. Furthermore, for any special commutative $\dag$-Frobenius algebra $\frob{}$, the decoherence morphism $\deco{}$ is an idempotent magma homomorphism for $F(\frob{})$.
\begin{corr} \label{prop:decoupdate}
Given any weak update structure $(\putt{},\get{},\frob{},\copyy{})$ in a $\dagger$-compact category $\cat{C}$ on system $S$, and property $p$ with $\frob{}$ a special commutative $\dag$-Frobenius algebra, composition with the decoherence map after the functor $F: \cat{C} \rightarrow \cpm{\cat{C}}$,
\begin{equation}\label{decoupdate}
    \tikzfigscale{1}{figs/decohereupdate}
\end{equation}
generates a weak update structure on $F(S) = S\otimes S^*$ and $F(p)=p\otimes p^*$.
\end{corr}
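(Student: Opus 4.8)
The plan is to read the corollary as the composite of two facts already assembled in the text, with no genuinely new diagrammatic computation required. First I would transport the weak update structure along $F$. Since $F:\cat{C}\to\cpm{\cat{C}}$, $f\mapsto f\otimes f^{*}$, is a (strong) monoidal functor it preserves composition, tensor, and the symmetry, so it carries every diagrammatic identity — PutPut, GetGet, PutGet, and the repeat-update axiom — over to the corresponding identity among $F(\putt{})$, $F(\get{})$, $F(\frob{})$, $F(\copyy{})$; this is precisely the claim, recorded just before the corollary, that $F$ sends an update structure in $\cat{C}$ to one in $\cpm{\cat{C}}$, and it works equally well for the weak axioms, which are the only ones in play. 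The result is a weak update structure on $(F(S),F(p)) = (S\otimes S^{*},\,p\otimes p^{*})$ in $\cpm{\cat{C}}$, with $F(\frob{})$ a special commutative $\dag$-Frobenius algebra on $p\otimes p^{*}$.

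Second, I would apply Proposition~\ref{transupdate} to this image structure, taking the morphism $m$ to be the decoherence map $\deco{}$ for $F(\frob{})$. The only thing to verify is that $\deco{}$ satisfies the hypothesis of Proposition~\ref{transupdate}; it does, because $\deco{}$ is an idempotent magma–comagma homomorphism for $F(\frob{})$ — idempotence being the familiar fact that measuring in a basis and re-preparing is unaffected by repetition, and the homomorphism property following from speciality, commutativity and the Frobenius law. Proposition~\ref{transupdate} then returns a weak update structure whose Put and Get are $F(\putt{})$ and $F(\get{})$ with $\deco{}$ inserted on the property wire, together with the induced (co)magma on $p\otimes p^{*}$; unwinding the substitution pattern shows these are exactly the morphisms drawn in~(\ref{decoupdate}), living on $F(S)$ and $F(p)$ as claimed.

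The argument is thus entirely assembly. The one place to be careful is matching conventions: confirming that the possibly weaker $m$-condition used in Proposition~\ref{transupdate} really is implied by "idempotent magma–comagma homomorphism", and that the placement of the $\deco{}$ boxes in~(\ref{decoupdate}) agrees on the nose with the substitution pattern of that proposition — on the magma and comagma as well as on the Put and Get. I do not expect any of this to present a real obstacle.
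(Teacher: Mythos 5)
Your proposal is correct and follows exactly the route the paper intends: the corollary is obtained by first pushing the update structure through the monoidal functor $F$ (which preserves all the diagrammatic axioms) and then invoking Proposition~\ref{transupdate} with $m$ the decoherence map, whose status as an idempotent magma--comagma homomorphism for $F(\frob{})$ is precisely the special case the paper records immediately before the corollary. No further comment is needed.
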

Furthermore, update structures obtained by decoherence define update structures on formally classical objects in $\spl{\cpm{\cat{C}}}$ - the Karoubi envelope of $\cpm{\cat{C}}$. 
\begin{defn}[Karoubi Envelope]
The Karoubi envelope $\spl{\cat{C}}$ of a category $\cat{C}$ has as objects the pairs $(A,\pi)$ where $A \in \ob{\cat{C}}$ and $\pi: A \rightarrow A$ is an idempotent. The morphisms $f : (A ,\pi) \rightarrow (B,\sigma)$ are the morphisms $f:A \rightarrow B$ such that $\sigma \circ f = f = f \circ \pi$. 
\end{defn}
As studied in \cite{coecke_classicality, selinger_idempotents, Heunen_cp}, the object $(F(p),\deco{})$ represents a classical version of the object $p$, on which the only states and transformations permitted are those which are unaffected by decoherence $f \circ \deco{} = f = \deco{} \circ f$. The weak update structure defined in equation (\ref{decoupdate}) then also appears as a weak update structure in $\spl{\cpm{\cat{C}}}$ on system $(F(S),\id{F(S)})$ and property $(F(p),\deco{})$.

\subsection{Quantum Observables Define Update Structures}
Projector valued spectra in $\fhilb$ capture quantum observables, they are characterised by the following graphical conditions.

\begin{defn}[Projector-Valued Spectrum] \cite{coecke_measurements}
The pair of morphisms $(\Pi,\wfrob{})$ in $\fhilb$ is a projector-valued spectrum if $\wfrob{}$ is a special commutative $\dag$-Frobenius algebra and $\Pi$ satisfies: 
\begin{equation}
    \tikzfigscale{1}{figs/pspectrum}
\end{equation}
These equations are paraphrased by asking that $\Pi$ be $p$-idempotent, $p$-self-adjoint, and $p$-complete respectively. Projectors are recovered from $(\Pi,\wfrob{})$ by inserting states in the basis $\{\ket{i}\}$ associated to $\wcopyy{}$.

\begin{equation}
    \tikzfigscale{1}{figs/projector}
\end{equation}

Since $\wcopyy{}$ copies the elements of $\{\ket{i}\}$, $p$-idempotency implies orthogonality and idempotency of the $P_{i}$, $p$-self-adjointness implies adjointness of each $P_{i}$ and $p$-completeness implies that the projectors sum to the identity.
\end{defn}
A projector valued spectrum defines a quantum observable, when the observable wire reads $\ket{i}$, the system will be found in the image of $P_{i}$. Using the $\dagger$ we can construct an update structure from a projector valued spectrum.
\begin{prop} For every projector-valued spectrum $(\Pi,\wfrob{})$ , the tuple $(\Pi,\Pi^{\dagger},\wfrob{},\wcopyy{})$ is an update structure.
\end{prop}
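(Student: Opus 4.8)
The plan is to verify the four update-structure axioms directly for $(\putt{},\get{},\mix{},\copyy{}):=(\Pi,\Pi^{\dagger},\wfrob{},\wcopyy{})$, reading each of the three projector-valued-spectrum conditions into one of the axioms and obtaining the fourth by daggering. For PutPut: up to one application of speciality of $\wfrob{}$, the $p$-idempotence condition is exactly $\Pi\circ(\Pi\otimes\id{p})=\Pi\circ(\id{S}\otimes\wfrob{})$, i.e.\ PutPut with $\mix{}=\wfrob{}$. For GetGet: since $\wfrob{}$ is a $\dagger$-Frobenius algebra we have $\wfrob{}^{\dagger}=\wcopyy{}$, so applying $\dagger$ to the PutPut equation --- which replaces $\Pi$ by $\Pi^{\dagger}$ and $\wfrob{}$ by $\wcopyy{}$ --- yields exactly $(\get{}\otimes\id{p})\circ\get{}=(\id{S}\otimes\copyy{})\circ\get{}$; so GetGet needs no separate work.

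The two axioms carrying real content, PutGet and GetPut, both mix $\Pi$ with $\Pi^{\dagger}$, so the key extra ingredient in each case is $p$-self-adjointness: bending the property leg of $\Pi$ along the Frobenius cup $\mathrm{cup}_{p}$ on $p$, this condition says $\Pi^{\dagger}=(\Pi\otimes\id{p})\circ(\id{S}\otimes\mathrm{cup}_{p})$. For PutGet --- $\Pi^{\dagger}\circ\Pi=(\Pi\otimes\id{p})\circ(\id{S}\otimes\wcopyy{})$ --- I would (i) use $p$-self-adjointness to trade the leading $\Pi^{\dagger}$ for $\Pi$ precomposed with the cup, which exposes the configuration $\Pi\circ(\Pi\otimes\id{p})$; (ii) apply PutPut to merge the two copies of $\Pi$, producing a $\wfrob{}$ on the property wires; and (iii) apply the Frobenius identity $(\wfrob{}\otimes\id{p})\circ(\id{p}\otimes\mathrm{cup}_{p})=\wcopyy{}$ to convert that $\wfrob{}$-with-cup into the comultiplication feeding $\Pi$'s property slot, which is the right-hand side. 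For GetPut --- $\Pi\circ\Pi^{\dagger}=\id{S}$ --- the analogous chain is $p$-self-adjointness, then PutPut, then speciality, which collapses $\wfrob{}\circ\mathrm{cup}_{p}$ to the Frobenius unit on the property wire and leaves $\Pi$ precomposed with that unit; this is $\id{S}$ precisely by $p$-completeness.

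Since all four axioms hold --- not merely the repeat-update weakening --- the tuple is a \emph{strong} update structure; one checks moreover that it has a trivial update (and trivial outcome), witnessed by the unit (and counit) of $\wfrob{}$, so its strength is also recoverable from Proposition~\ref{prop:weaktostrong}. Combined with Corollary~\ref{prop:decoupdate}, this yields the advertised weak update structures on the decohered classical objects. I expect PutGet to be the main obstacle: it is the one axiom whose right-hand side threads the property wire through a comultiplication, so it genuinely needs the full force of $p$-self-adjointness together with PutPut and a careful use of the Frobenius laws, whereas PutPut and GetGet are essentially bookkeeping and GetPut collapses quickly once the same diagrammatic moves are available.
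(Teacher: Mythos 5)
Your proposal is correct and follows essentially the same route as the paper: PutPut and GetGet come directly from $p$-idempotency and its dagger, GetPut from the fact that $\Pi^{\dagger}$ is an isometry (via $p$-self-adjointness, PutPut, speciality and $p$-completeness), and PutGet from $p$-self-adjointness combined with PutPut and the Frobenius law. The paper's proof is just a terser graphical rendering of the same three-step argument.
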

\begin{proof}
By $p$-idempotency, $(\Pi,\Pi^{\dagger},\wfrob{},\wcopyy{})$ is immediately a (co)magma module pair. The GetPut axiom holds since any projector valued spectrum is isometric with inverse given by the adjoint:
\begin{equation}
    \tikzfigscale{1}{figs/getputprojector}
\end{equation}
Finally the PutGet axiom holds.
\begin{equation}
    \tikzfigscale{1}{figs/putgetprojector2}
\end{equation}
\end{proof}
Quantum measurements $(Q(\Pi{}),Q(\wfrob{}))$ are defined by taking projector valued spectra $(\Pi{},\wfrob{})$ in $\fhilb$, embedding into $\cpm{\fhilb}$, $F \circ \dagger: \fhilb \rightarrow \cpm{\fhilb}$ \cite{selinger} and subsequently applying the decoherence morphism $\deco{}$ to the property wires \cite{coecke_measurements, bob_coecke_aleks_kissinger_picturing_2017}:
\begin{equation*}
    \tikzfigscale{1}{figs/qmmeasure3}
\end{equation*}


\begin{corr}
For every quantum measurement $(Q(\Pi) ,Q(\wfrob{}))$ the tuple $(Q(\Pi)^{\dagger}, Q(\Pi) ,Q(\wfrob{})^{\dagger},Q(\wfrob{}))$ is a weak update structure.
\end{corr}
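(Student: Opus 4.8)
The plan is to recognise the tuple $(Q(\Pi)^{\dagger}, Q(\Pi), Q(\wfrob{})^{\dagger}, Q(\wfrob{}))$ as precisely the image of the update structure $(\Pi,\Pi^{\dagger},\wfrob{},\wcopyy{})$ under the construction of Corollary~\ref{prop:decoupdate}. By the preceding proposition, $(\Pi,\Pi^{\dagger},\wfrob{},\wcopyy{})$ is a (strong, hence weak) update structure in $\fhilb$, and by the definition of a projector-valued spectrum its magma $\wfrob{}$ is a special commutative $\dagger$-Frobenius algebra with comultiplication $\wcopyy{} = \wfrob{}^{\dagger}$. Hence Corollary~\ref{prop:decoupdate} applies and produces a weak update structure on $F(S)$ and $F(p)$; the whole content of the proof is then to verify that its four component morphisms are exactly the four morphisms named in the statement.

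To carry this out I would unfold the definition $Q = D \circ F \circ (\,\cdot\,)^{\dagger}$, where $D$ denotes composition with the decoherence map $\deco{}$ on the $F(p)$ wires, and use three facts: $F$ is a $\dagger$-functor, so $F(f)^{\dagger} = F(f^{\dagger})$; the decoherence map is self-adjoint and idempotent; and $\wfrob{}^{\dagger} = \wcopyy{}$. These immediately give $Q(\Pi) = D\circ F(\Pi^{\dagger})$ with $Q(\Pi)^{\dagger} = F(\Pi)\circ D$, and $Q(\wfrob{}) = D\circ F(\wcopyy{})$ with $Q(\wfrob{})^{\dagger} = F(\wfrob{})\circ D$. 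Matching these against the four operations displayed in equation~(\ref{decoupdate}) — the decohered images under $F$ of $\get{}=\Pi^{\dagger}$, $\putt{}=\Pi$, $\copyy{}=\wcopyy{}$ and $\mix{}=\wfrob{}$ respectively — one reads off exactly $(Q(\Pi)^{\dagger}, Q(\Pi), Q(\wfrob{})^{\dagger}, Q(\wfrob{}))$, up to harmless repetitions of $\deco{}$ absorbed by idempotence. Given this identification, weakness of the tuple is inherited directly from Corollary~\ref{prop:decoupdate}.

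The only genuine obstacle is this bookkeeping step: one must check that the placement of decoherence maps baked into the definition of $Q$ coincides with the placement dictated by equation~(\ref{decoupdate}), which uses that $\deco{}$ is an idempotent magma homomorphism for $F(\wfrob{})$ so that decohering the inputs and decohering the output of the CPM (co)multiplication agree. If it is cleaner, an equivalent route is to invoke Proposition~\ref{transupdate} directly with $m = \deco{}$ applied to the update structure $(\Pi,\Pi^{\dagger},\wfrob{},\wcopyy{})$ transported along $F$ (which is how Corollary~\ref{prop:decoupdate} is itself derived); I would present whichever makes the decoherence-placement check shortest.
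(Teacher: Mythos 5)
Your proposal is correct and follows exactly the route the paper intends: the corollary is stated without proof precisely because it is the composite of the preceding proposition (projector-valued spectra give strong, hence weak, update structures) with Corollary~\ref{prop:decoupdate} applied via the embedding $F\circ\dagger$ and decoherence on the property wires. Your bookkeeping of the dagger functor, the self-adjointness and idempotence of $\deco{}$, and the identification of the four components with equation~(\ref{decoupdate}) is the right (and only) content of the argument.
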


The GetPut axiom fails for quantum measurements because extraction and re-insertion of classical information induces decoherence, a disturbance, on the quantum system. Via a GetPut restriction as introduced in \cite{hefford2020categories} any such weak update structure on a fully quantum system $(F(S),\id{F(S)})$ can be used to construct a strong update structure on a measured, partially decohered, system $(S,\putt{} \circ \get{})$. On top of being update structures, projector valued spectra carry some additional conditions, they have trivial updates (outcomes) their Puts are commutative, and they are faithful.  
\begin{defn}[Faithful]
An update structure is faithful if 
\begin{equation}\label{faithful}
    \tikzfigscale{1}{figs/faithful}
\end{equation}
\end{defn}
Conversely, the above conditions are enough to characterise projector valued spectra.

\begin{prop} \label{putisproj}
 An update structure of the form $(\putt{}, \putt{}^{\dagger},\frob{}, \copyy{})$ in $\cat{FHilb}$ is a projector valued spectrum if and only if it is faithful, commutative, and has a trivial update (outcome).
\end{prop}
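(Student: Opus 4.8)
The plan is to prove both implications separately; the ``only if'' direction is essentially the observations made just before the statement (a projector-valued spectrum has commutative Put, a trivial update/outcome, and is faithful), so almost all of the work is in the ``if'' direction.

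For ``only if'', assume the tuple is the update structure $(\Pi,\Pi^{\dagger},\wfrob{},\wcopyy{})$ of a projector-valued spectrum $(\Pi,\wfrob{})$, so $\putt{}=\Pi$, $\get{}=\Pi^{\dagger}$, $\frob{}=\wfrob{}$, $\copyy{}=\wcopyy{}$. Commutativity of the Put follows from commutativity of $\wfrob{}$ together with $p$-idempotency (which makes the $P_i$ commuting idempotents); a trivial update, equivalently --- since $\get{}=\putt{}^{\dagger}$ --- a trivial outcome, is witnessed by the Frobenius unit $\sum_i\ket{i}$, because plugging it into $\Pi$ is precisely $p$-completeness $\sum_i P_i=\id{S}$; and faithfulness holds because $p$-idempotency lets the Put detect the copied property. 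Each of these is a one- or two-line diagrammatic check from $p$-idempotency, $p$-self-adjointness and $p$-completeness.

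For ``if'' the strategy is in three parts. \emph{(1) Upgrade the ``under the Put/Get'' results to honest algebraic laws.} Proposition~\ref{prop:commute} makes $\frob{}$ commutative under $\putt{}$ and $\copyy{}$ cocommutative under $\get{}$; Proposition~\ref{prop:ignorable} supplies a unit and counit under $\putt{}$ and $\get{}$; PutPut and GetGet force $\frob{}$ associative under $\putt{}$ and $\copyy{}$ coassociative under $\get{}$; and taking $\dagger$ of PutPut and comparing with GetGet (using $\get{}=\putt{}^{\dagger}$) gives $(\id{S}\otimes\frob{}^{\dagger})\circ\get{}=(\id{S}\otimes\copyy{})\circ\get{}$. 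In each case faithfulness (Definition~\ref{faithful}) is exactly what lets the $\putt{}$ (or $\get{}$) be stripped off the property wire, yielding the corresponding identity for $\frob{}$ and $\copyy{}$ on the nose; in particular $\copyy{}=\frob{}^{\dagger}$. \emph{(2) Identify the algebra.} From GetPut, PutGet and PutPut one gets $\putt{}=\putt{}\circ(\id{S}\otimes(\frob{}\circ\copyy{}))$, and faithfulness then yields speciality $\frob{}\circ\copyy{}=\id{p}$; the Frobenius law is obtained in the manner of the proof of Proposition~\ref{prop:putgetBC}, by evaluating $\get{}\circ\putt{}\circ(\putt{}\otimes\id{p})$ via PutGet-then-PutPut and via ($\dagger$-)PutGet-then-PutGet and cancelling the Put. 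Together with (1), $(\frob{},\copyy{})$ is a special commutative $\dag$-Frobenius algebra on $p$, equivalently an orthonormal basis $\{\ket{i}\}$. \emph{(3) Verify the spectrum equations for $\Pi:=\putt{}$.} Writing $P_i:=\putt{}\circ(\id{S}\otimes\ket{i})$: PutPut gives $P_iP_j=\delta_{ij}P_i$, i.e. $p$-idempotency; the trivial update, whose witnessing state is now necessarily the Frobenius unit, gives $\sum_iP_i=\id{S}$, i.e. $p$-completeness; and PutGet forces $P_i^{\dagger}P_i=P_i$, hence $P_i^{\dagger}=P_i$ on taking adjoints, i.e. $p$-self-adjointness. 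So $(\Pi,\frob{})$ is a projector-valued spectrum.

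The main obstacle is part (2), and within it the Frobenius law: all the other laws reduce to a mechanical ``cancel the Put'' manipulation, but deriving the Frobenius law requires combining PutGet with its adjoint and with PutPut in precisely the right order, after which one must cancel $\putt{}$ from a pair of morphisms $S\otimes p\otimes p\to S\otimes p$ that agree only downstream of the Put; making this cancellation rigorous will need either bending the spectator $p$-leg around via the compact structure of $\cat{FHilb}$, or invoking the classification of special commutative $\dag$-Frobenius algebras in $\cat{FHilb}$. A secondary point is to confirm that the several instances of ``faithfulness lets me strip off the Put/Get'' are all genuinely licensed by Definition~\ref{faithful}.
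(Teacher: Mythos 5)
Your proposal follows essentially the same route as the paper's proof: the forward direction is delegated to the preceding results, and the converse uses the Section~3 propositions plus faithfulness to show the magma is a commutative $\dag$-Frobenius algebra, derives speciality from PutPut applied to the repeat-update law, and then matches $p$-completeness, $p$-idempotency and $p$-self-adjointness to the trivial update, PutPut and PutGet respectively. The paper's version is just terser (it cites ``the results of section 3'' where you spell out the faithfulness-cancellation steps, including the spectator-wire subtlety in the Frobenius law that the paper glosses over).
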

\begin{proof}
Since we have already demonstrated the $\Rightarrow$ direction we only need to consider $\Leftarrow$. The results of section 3 imply that the magma of a faithful, commutative $\dagger$-update with a trivial update (outcome) is a commutative $\dag$-Frobenius algebra, furthermore by applying the PutPut axiom to the left hand side of the repeat update axiom, this Frobenius algebra is special. The unit $\unit{}$ is the trivial update, so $\putt{}$ is $p$-complete. The PutPut axiom is precisely the demand of $p$-idempotency, and finally since $\putt{}$ is $p$-complete, the PutGet axiom implies that $\putt{}$ is $p$-self-adjoint. 
\begin{equation}
    \tikzfigscale{1}{figs/weakisselfadjoint}
\end{equation}
\end{proof}
To search for new update structures in $\cat{FHilb}$ we must then relax one of the conditions of proposition \ref{putisproj}.

\section{Examples of Update Structures}
To demonstrate the kinds of procedures that can be implemented as update structures we give some new examples in $\fhilb$, $\cpm{\fhilb}$, and $\set$. 
\subsection{Morphisms as Properties}
We now give a new example of a non-commutative strong update structure in $\fhilb$ based on the ``pair of pants" monoid \cite{heunen2019categories}. $\dagger$-Compact categories come equipped with an evaluation morphism $\epsilon : S^{*} \otimes S \otimes S \rightarrow S$ and a composition morphism $\sigma : S^{*} \otimes S \otimes S^{*} \otimes S \rightarrow S^{*} \otimes S$ which satisfy PutPut. The evaluation takes as inputs a morphism and a state, and produces an output by applying the morphism to the state. We can use $\epsilon$ and $\sigma$ as $\putt{}$ and $\mix{}$ respectively, along with the dagger to construct an update structure for which systems are states, and properties are morphisms waiting to be applied to those states. 
\begin{prop}
The tuple $(\putt{},\get{},\mix{},\copyy{})$ in a $\dagger$-compact category with invertible scalars defined by,
\begin{equation*}
    \tikzfigscale{1}{figs/compupdate}
\end{equation*}
is a strong update structure.
\end{prop}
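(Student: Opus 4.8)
The plan is to instantiate the update structure with property object $p := S^{*}\otimes S$, the object of endomorphisms of $S$, taking $\putt{} := \epsilon$ (reordered to the type $\putt{}\colon S\otimes p\morph{}S$) and $\mix{} := \sigma$ as the text indicates, and then defining $\get{}$ and $\copyy{}$ to be the adjoints $\epsilon^{\dagger}$ and $\sigma^{\dagger}$, each rescaled by one common invertible scalar, the inverse of the loop value $\dim(S)$ of $S$. This is where the hypothesis that the relevant scalars be invertible is used. The ingredient behind the choice of scalar is that $\epsilon\circ\epsilon^{\dagger}=\dim(S)\cdot\id{S}$ and, because the pair-of-pants algebra on $S^{*}\otimes S$ is special only up to the same scalar, $\sigma\circ\sigma^{\dagger}=\dim(S)\cdot\id{p}$; both identities are immediate from yanking in the $\dagger$-compact structure, a single closed loop contributing the factor $\dim(S)$.

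With this data, PutPut is exactly the statement --- recalled in the text immediately before the proposition --- that $\epsilon$ is an action of the pair-of-pants monoid $\sigma$, i.e.\ essentially associativity of composition, $g\circ(f\circ v)=(g\circ f)\circ v$; nothing is left to verify. GetGet then follows by taking daggers of PutPut: both sides of GetGet carry exactly one $\get{}$ and one $\copyy{}$, so the rescaling scalars occur with equal multiplicity on each side and cancel. GetPut, with the chosen normalisation, reads $\dim(S)^{-1}(\epsilon\circ\epsilon^{\dagger})=\id{S}$, which is immediate from the loop identity above; this is the analogue of the remark used earlier for projector-valued spectra that an isometry composed with its adjoint is the identity, and since GetPut holds the resulting update structure is strong rather than merely weak.

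The main obstacle is PutGet, $\get{}\circ\putt{}=(\putt{}\otimes\id{p})\circ(\id{S}\otimes\copyy{})$, which asserts that applying a morphism to a state and then co-evaluating coincides with first co-composing the morphism and then applying one of the two resulting factors to the state. I would prove this diagrammatically: expand $\epsilon$, $\sigma$ and their adjoints into cups and caps, and observe that the cup introduced on the left by $\epsilon^{\dagger}$ and the one introduced on the right by $\sigma^{\dagger}$ slide along the wires into each other, so that both sides reduce to the same diagram after the snake equations, with the normalising scalars again matching because exactly one loop is produced on each side. The only other thing needing care is that all composites genuinely type-match --- the $S$ versus $S^{*}$ bookkeeping and the reorderings implicit in writing $\putt{}\colon S\otimes p\morph{}S$ --- and this is handled automatically by the compact structure. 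Collecting the four verified axioms for the strong version yields the claim; the non-commutativity of this update structure is not something to check but simply reflects the non-commutativity of composition of endomorphisms of $S$.
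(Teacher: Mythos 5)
Your proof is correct and takes essentially the same route as the paper: PutPut and GetGet come from $\sigma$ being the adjunct of $\epsilon\circ(\epsilon\otimes\id{})$ (associativity of the pair-of-pants action and its dagger), while PutGet and GetPut are direct snake-equation computations, with the $\dim(S)^{-1}$ normalisation on $\get{}$ and $\copyy{}$ absorbing the closed loop arising in GetPut. One small bookkeeping slip: the left-hand side of GetGet contains two $\get{}$'s and no $\copyy{}$, and neither side of PutGet actually produces a closed loop, but since each side of those equations still carries the same number of $\dim(S)^{-1}$ factors (two and one respectively), your conclusion that the scalars match is unaffected.
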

\begin{proof}

PutPut and GetGet can be confirmed graphically, but also follow by definition of composition $\sigma$ as the adjunct to $\epsilon \circ (\epsilon \otimes I)$ in any symmetric closed monoidal category. PutGet is quick to verify 
\begin{equation}\label{pgcomp}
    \tikzfigscale{1}{figs/putgetcomp}
\end{equation}
as is GetPut
\begin{equation}\label{gpcomp}
    \tikzfigscale{1}{figs/getputcomp}
\end{equation}

\end{proof}
This update structure is faithful and has a trivial update, its magma and co-magma are Frobenius, but non-commutative. We note that the $\putt{}$ of this update structure was introduced in \cite{coecke_mathematical_meaning} as a way to make ``fuzz" and ``phaser" updates internal.

\subsection{A Security Tagged Database}
Update structures generalise vwb lenses by allowing for a Get of type $\get{}:S \rightarrow S \otimes p$ which returns the system $S$ in addition to the property $p$. This allows us to create lens like structures for which the act of tampering has an influence on the system. We present a toy example in $\set$ by introducing a security feature which raises a flag to indicate that a database has been breached. We imagine that the system decomposes as a database entry and a flag $S = p \otimes F$ with $F = \{\texttt{safe},\texttt{breached}\}$. The flag records whether or not anybody has breached the database since it was last set to $\texttt{safe}$. In particular we take $\putt{}:  p \otimes F \otimes p \rightarrow p \otimes F$ and $\get{}: p \otimes F \rightarrow p \otimes F \otimes p$ defined by:

\begin{align*}
    \putt{} & :: \langle p,x \rangle \times p' \mapsto \langle p',\texttt{breached} \rangle & \mix{} & :: p \times p' \mapsto \langle p' \rangle \\
    \get{} & :: \langle w,x\rangle \mapsto \langle p,\texttt{breached} \rangle \times p & \copyy{} & :: \langle p \rangle \mapsto p \times p
\end{align*}

It is easy to check that this choice defines a weak update structure, the GetPut axiom fails precisely because the act of interacting with the database has influence on the database, beyond the property $p$. In a restricted case for which there is only a flag for the occurrence of an update $S = p \otimes F_{\putt{}}$, this update structure is a lens, although not a vwb lens. Such a weak update structure can be used to build a new strong update structure in $\spl{\set}$ on the $\texttt{breached}$ type $(S,\putt{} \circ \get{})$ \cite{hefford2020categories}.

\subsection{Quantum Versions of Database Updates}
We now consider update structures in $\fhilb$ or $\cpm{\fhilb}$ with $(\get{},\copyy{})$ in the form of a projector valued spectrum or quantum measurement, and with a vwb flavoured ignore-replace co-magma $\mix{} = \effect{} \otimes f$. In $\fhilb$ a natural choice is for $\get{}$ to copy a subsystem $S_2$ of $S$ in a particular basis, and for $\putt{}$ to delete in the same basis.
\begin{equation}\label{qmdatabaseproj}
    \tikzfigscale{1}{figs/qmdatabasefhilb}
\end{equation}
where $\wfrob{}$ is a special commutative $\dag$-Frobenius algebra. We note that $\putt{}$ and $\mix{}$ both require post-selection on the unit for $\wfrob{}$. The embedding $F: \fhilb \rightarrow \cpm{\fhilb}$ followed by application of decoherence to every property wire as in proposition \ref{transupdate} generates a new update structure, to implement this new $\putt{}$ would still require a post-selection on $F(S_2)$.

We can directly define a vwb lens-like update structure with a $\putt{}$ which deterministically discards \cite{bob_coecke_aleks_kissinger_picturing_2017} and replaces the property $F(S_2)$ and with $(\get{},\copyy{})$ a quantum measurement.



\begin{equation}\label{qmdatabase}
    \tikzfigscale{1}{figs/qmdatabasecausal}
\end{equation}
The axioms are easily checked directly, they also follow by noting that this update structure can be constructed from our projector valued spectrum by applying decoherence to each property wire followed by applying GetPut to each system wire \cite{hefford2020categories}, each of which preserves the axioms of a weak update structure. This update structure is weak because the system $F(S_1) \otimes F(S_2)$ is partially decohered when information about $F(S_2)$ is extracted and re-inserted. The reduced process on the system obtained by tracing out the property wire of $\get{}$ is a partial decoherence, whereas for any $\get{}$ built from composing a deterministic $g: F(S_1) \otimes F(S_2) \rightarrow F(S_2)$ and a copy map (i.e. in a lens-like way), the reduced process on the system
\begin{equation}\label{qmdatabase}
    \tikzfigscale{1}{figs/detget2}
\end{equation}
is a full decoherence on the entire system. 

\color{black}


\section{Conclusion}
In this article we have furthered our development of update structures. We originally introduced update structures with operational inspiration and we have now shown that very-well-behaved lenses, quantum observables and measurements and a slight alteration on the pair-of-pants monoid all live within the zoo of update structures. This offers a unifying perspective of two fields and elevates update structures to algebraic objects that deserve further investigation.

This work opens several lines of investigation. We have not discussed composition of update structures in this article, but there seems to be multiple interesting such notions. Each such notion would impose additional structure in order to be well-defined. Exploring these alternatives would provide further insight into those structures and contribute to a more featureful unification.

We also endeavour to expand the zoo of update structures with more exotic beasts. Perhaps such structures could allow for updates which behave neither quantumly nor classically and further the search for post-quantum phenomena.

A further goal would be to formally justify the axioms of update structures. Currently they are operationally well-chosen but it is not clear how necessary and sufficient they are, or whether other choices could lead to related and interesting structures. We touched upon this when we investigated alternative PutGet rules, but the full ramifications of these choices are not currently understood.

\subsubsection*{Acknowledgements}
With many thanks to Sean Tull for pointing out connections to quantum observables and to Bob Coecke for helpful discussions.
MW and JH acknowledge the support of University College London and the EPSRC [grant number EP/L015242/1].

\bibliographystyle{eptcs}
\bibliography{bibliography}

\end{document}